\documentclass[a4paper,11pt]{article}

\usepackage{amsmath,amssymb,amsfonts,amsthm}
\usepackage[dvipdfmx]{graphicx}
\numberwithin{equation}{section}

\newcommand{\R}{\mathbb{R}}
\newcommand{\C}{\mathbb{C}}

\newcommand{\diag}{\operatorname{diag}}
\newcommand{\slim}{\operatorname*{s-lim}}

\newcommand{\jj}[1]{j_{#1}}
\newcommand{\hj}[1]{h^{(1)}_{#1}}

\renewcommand{\Im}{\operatorname{Im}}

\newtheorem{thm}{Theorem}[section]
\newtheorem{lem}[thm]{Lemma}
\newtheorem{cor}[thm]{Corollary}
\newtheorem{prop}[thm]{Proposition}

\theoremstyle{definition}

\theoremstyle{remark}
\newtheorem{rem}[thm]{Remark}

\title{Determinant Formulas for Scattering Matrices of Schr\"odinger Operators with Finitely Many Concentric $\delta$-Shells}
\author{Masahiro Kaminaga}
\date{}

\begin{document}
\maketitle

\noindent\textbf{Keywords:} $\delta$-interaction; scattering matrix; resolvent formula; partial-wave decomposition

\noindent\textbf{Mathematical Subject Classification (2020):} 35J10, 35P25, 47A40, 81U20

\begin{abstract}
We study stationary scattering for Schr\"odinger operators in $\R^3$ with
finitely many concentric $\delta$--shell interactions of constant real
strengths. Starting from the self--adjoint realization and the boundary
resolvent formula for this model, we show that, after partial--wave
reduction, the same finite-dimensional boundary matrices that arise in the
resolvent formula also determine the channel scattering coefficients.
More precisely, for each angular momentum $\ell$, the channel coefficient
$S_\ell(k)$ satisfies
$S_\ell(k)=\det K_\ell(k^2-i0)/\det K_\ell(k^2+i0)$ for almost every $k>0$,
where $K_\ell(z)=I_N+m_\ell(z)\Theta$ is the $\ell$--th reduced boundary
matrix. Thus, in each channel, the positive--energy scattering problem is
reduced to a finite-dimensional matrix problem, and the scattering phase is
recovered from $\det K_\ell(k^2+i0)$.

We then study the first nontrivial case of two concentric shells in the
$s$--wave channel, where the interaction between the shells produces
nontrivial threshold effects. We derive an explicit formula for $S_0(k)$
and analyze its behavior as $k\downarrow0$. In the regular threshold
regime, we obtain an explicit scattering length. We further identify a
threshold--critical configuration characterized by the existence of a
nontrivial zero--energy radial solution, regular at the origin, whose
exterior constant term vanishes. In the corresponding nondegenerate
exceptional case, the usual finite scattering length breaks down, and
instead $S_0(k)\to -1$ as $k\downarrow0$.
\end{abstract}

\section{Introduction}

Schr\"odinger operators with singular interactions are a classical source of
explicit models in spectral and scattering theory, 
as discussed in \cite{AlbeverioEtAl2005} and in
the work of Brasche, Exner, Kuperin, and \v{S}eba \cite{BrascheEtAl1994}. 
Among them, concentric
$\delta$--shell interactions provide a natural radial model. Since the
interaction is supported on concentric spheres, the operator is rotationally
symmetric and the scattering problem admits a partial--wave decomposition. At
the same time, when two or more shells are present, the interaction among the
shells produces nontrivial scattering and threshold effects. This makes the
model simple enough for explicit calculation, but still rich enough to show
interesting phenomena.

In this paper we study stationary scattering for
\begin{equation}\label{eq:intro-H}
H
=
-\Delta+\sum_{j=1}^N \alpha_j\delta(|x|-R_j)
\qquad \text{in } L^2(\R^3),
\end{equation}
where $0<R_1<\cdots<R_N$ and
$\alpha_1,\ldots,\alpha_N\in\R$, and we compare $H$ with the free operator
$H_0=-\Delta$.

For finitely many concentric spherical $\delta$--shells, Shabani
\cite{Shabani1988} studied the model by self--adjoint extension methods and
reduced it to radial one--dimensional equations with matching conditions in
each partial wave. On the scattering side, Hounkonnou, Hounkpe, and Shabani
\cite{HounkonnouHounkpeShabani1997} studied scattering theory for finitely
many sphere interactions supported by concentric spheres. Related spectral
and scattering properties for radially symmetric penetrable wall models were
studied by Ikebe and Shimada \cite{IkebeShimada1991}. These penetrable wall
models may be viewed as a regular counterpart of spherical shell
interactions. 
Thus, in the earlier literature, the scattering problem is
treated after partial--wave reduction by solving radial equations and
imposing matching conditions at the shells in each channel. 
In particular, the existence of a partial--wave description for this model is not new.

Our starting point is the boundary resolvent framework obtained in \cite{KaminagaAAMP2026}. 
For finitely many concentric shells, that paper
gives a self--adjoint realization of $H$ and a resolvent formula in terms of
a boundary operator. This framework is closely related to the general theory
of hypersurface $\delta$--interactions; see, for example,
\cite{BehrndtLangerLotoreichik2013}. It is also related to abstract
Kre\u{\i}n--type resolvent formulas for singular perturbations; see
\cite{Posilicano2001}. 
In addition, the construction in
\cite{KaminagaAAMP2026} allows the surface strengths $\alpha_j$ to be bounded
real--valued functions on the shells and does not assume that they are constant. 
In the present paper we do not repeat that construction. Instead,
we restrict ourselves to the rotationally symmetric case of constant shell
strengths and study the positive--energy scattering problem from the boundary
resolvent formula. For the convenience of the reader, in
Section~\ref{sec:model} we recall only the precise input from
\cite{KaminagaAAMP2026} that is needed below, namely the quadratic--form
realization of $H$, the interface description of $D(H)$, the boundary
resolvent formula, the trace--class property of the resolvent difference, and
the characterization of the point spectrum by the boundary matrix $K_N(z)$.

The main point of the present paper is that, in the rotationally symmetric
case, the same reduced boundary matrices that arise in the resolvent formula
also determine the channel scattering coefficients. For each angular
momentum $\ell\ge0$, let $S_\ell(k)$ denote the channel scattering
coefficient. Our main result is the determinant formula
\begin{equation}\label{eq:intro-main}
S_\ell(k)
=
\frac{\det K_\ell(k^2-i0)}{\det K_\ell(k^2+i0)},
\qquad \text{for almost every } k>0.
\end{equation}
Here $K_\ell(z)$ is the $\ell$--th reduced boundary matrix. The full boundary
operator has the form
$$
K_N(z)=I+m(z)\Theta,
\qquad
\Theta=\diag(\alpha_1R_1^2,\ldots,\alpha_NR_N^2),
$$
where $m(z)$ is the boundary operator associated with the free Green
function and $I$ is the identity operator on the corresponding boundary
space. Its $\ell$--th partial--wave component is
$$
K_{N,\ell}(z)=I_N+m_\ell(z)\Theta,
$$
where $I_N$ denotes the $N\times N$ identity matrix. To simplify notation, we
write $K_\ell(z)$ for $K_{N,\ell}(z)$. 
Thus \eqref{eq:intro-main} is not merely an explicit channel formula. 
It shows that the positive--energy scattering data are encoded by exactly the same
reduced boundary matrices that appear in the resolvent formula. 
To the best of our knowledge, the determinant representation
\eqref{eq:intro-main}, written explicitly in terms of the reduced boundary
matrix arising from the resolvent formula, does not appear in the earlier
partial--wave literature on concentric $\delta$--shells.

As an application of this framework, we study the first nontrivial case of
two concentric shells. In the $s$--wave channel we derive an explicit formula
for the scattering coefficient and analyze its low--energy behavior as
$k\downarrow0$. In the regular threshold regime, the phase shift admits the
standard expansion
\begin{equation}\label{eq:intro-scatt-length}
\delta_0(k)=-a_{\rm s}k+o(k)
\qquad (k\downarrow0),
\end{equation}
which defines the scattering length $a_{\rm s}$. We obtain an explicit
formula for $a_{\rm s}$ and show that it agrees with the coefficient in the
asymptotics of the zero--energy radial solution, normalized so that, for
$|x|>R_2$,
$$
u(x)=1-\frac{a}{|x|},
$$
where $a=a_{\rm s}$. We also identify a threshold--critical configuration
characterized by the existence of a nontrivial zero--energy radial solution
which is regular at the origin and whose exterior constant term vanishes. In
the corresponding nondegenerate exceptional case, the usual finite
scattering length does not exist and one has
$$
S_0(k)\to -1
\qquad (k\downarrow0).
$$
This gives a concrete zero--energy interpretation of the threshold anomaly
in the double--shell model.

The paper is organized as follows. In Section~\ref{sec:model} we recall the
quadratic form realization and the resolvent formula from
\cite{KaminagaAAMP2026}. In Section~\ref{sec:scatt} we prove the determinant
formula for the channel scattering matrices. In
Section~\ref{sec:double} we specialize to the double $\delta$--shell case and
derive an explicit formula for the $s$--wave scattering coefficient. In
Section~\ref{sec:low} we analyze the low--energy behavior, including the
regular threshold regime and the nondegenerate exceptional threshold regime,
and give a zero--energy interpretation of the latter.
\section{The model and the resolvent formula}\label{sec:model}

In this section we recall, in the present concentric-shell setting,
the precise ingredients from \cite{KaminagaAAMP2026} that will be used later.
We do not repeat the proofs.

\subsection{The operator and its quadratic form}

For $j=1,\ldots,N$, let
$$
S_j=\{x\in\R^3:|x|=R_j\}.
$$
We consider the Schr\"odinger operator
$$
H=-\Delta+\sum_{j=1}^N \alpha_j\delta(|x|-R_j)
$$
in $L^2(\R^3)$, where $\alpha_1,\ldots,\alpha_N\in\R$.
Schr\"odinger operators with $\delta$--interactions supported on
hypersurfaces are well studied; see, for example,
\cite{AlbeverioEtAl2005,BrascheEtAl1994,BehrndtLangerLotoreichik2013}.
For abstract Kre\u{\i}n--type resolvent formulas for singular perturbations,
see also \cite{Posilicano2001}. For finitely many concentric spherical
shells, a self--adjoint realization and a boundary integral resolvent formula
were obtained in \cite{KaminagaAAMP2026}.

We define the sesquilinear form $h$ on the form domain
$D[h]=H^1(\R^3)$ by
\begin{equation}\label{eq:form}
h[u,v]
=
\int_{\R^3}\nabla u\cdot\nabla\overline{v}\,dx
+
\sum_{j=1}^N \alpha_j\int_{S_j}u\overline{v}\,d\sigma_j,
\qquad
u,v\in H^1(\R^3).
\end{equation}
Here $D[h]$ denotes the form domain of $h$, and $d\sigma_j$ denotes the
surface measure on $S_j$. Since each $S_j$ is a smooth compact hypersurface,
the trace map
\[
H^1(\R^3)\to L^2(S_j)
\]
is bounded. 
Hence, by the trace inequality, the boundary terms are form bounded with
relative bound zero with respect to the Dirichlet form, and thus $h$ is
closed and lower semibounded.

\subsection{Boundary operators and the resolvent formula}

Let
$$
H_0=-\Delta
$$
denote the free Hamiltonian. Let $z\in\C\setminus[0,\infty)$ and choose
$\sqrt{z}$ so that $\Im\sqrt{z}>0$. We write
$$
R_0(z)=(H_0-z)^{-1},
\qquad
G_z(x,y)=\frac{e^{i\sqrt{z}|x-y|}}{4\pi|x-y|}.
$$
Here $R_0(z)$ is the free resolvent and $G_z(x,y)$ is the corresponding free
Green function.

For each $j=1,\ldots,N$, let $\tau_j:H^1(\R^3)\to L^2(S^2)$ denote the trace
on $S_j$ transported to the unit sphere by the parametrization $x=R_j\omega$,
that is,
$$
(\tau_j u)(\omega)=u(R_j\omega),
\qquad
\omega\in S^2.
$$
Thus $\tau_j u$ is the trace of $u$ on $S_j$, viewed as a function on $S^2$.
Accordingly, all surface integrals defining the boundary operators below are
written with respect to $d\omega$ on $S^2$ rather than $d\sigma_j$ on $S_j$.
We define the single--layer operators
$$
(\Gamma_j(z)\varphi)(x)
=
\int_{S^2}G_z(x,R_j\omega)\varphi(\omega)\,d\omega,
\qquad
j=1,\ldots,N,
$$
and define
$$
\Gamma(z):\bigoplus_{j=1}^N L^2(S^2)\to L^2(\R^3)
$$
by
$$
\Gamma(z)(\varphi_1,\ldots,\varphi_N)
=
\sum_{j=1}^N \Gamma_j(z)\varphi_j.
$$
We also introduce the operator matrix
$$
m(z)=\bigl(m_{ij}(z)\bigr)_{i,j=1}^N
$$
on $\bigoplus_{j=1}^N L^2(S^2)$, where each entry
$m_{ij}(z)$ is the operator on $L^2(S^2)$ given by
$$
(m_{ij}(z)\varphi)(\omega)
=
\int_{S^2}G_z(R_i\omega,R_j\omega')\varphi(\omega')\,d\omega'.
$$
Following \cite{KaminagaAAMP2026}, we set
\begin{equation}\label{eq:Theta}
\Theta=\diag(\alpha_1R_1^2,\ldots,\alpha_NR_N^2),
\qquad
K_N(z)=I+m(z)\Theta,
\end{equation}
where $I$ denotes the identity operator on
$L^2(S^2)\oplus\cdots\oplus L^2(S^2)$. Thus $K_N(z)$ is the boundary
operator matrix associated with the shell interaction.

The next theorem collects the only facts from \cite{KaminagaAAMP2026}
that will be used in the present paper.

\begin{thm}\label{thm:input}
The quadratic form $h$ in \eqref{eq:form} is closed and lower semibounded
on $H^1(\R^3)$, and therefore defines a self--adjoint operator $H$ in
$L^2(\R^3)$.

Moreover, a function $u\in L^2(\R^3)$ belongs to the operator domain $D(H)$
if and only if $u$ is piecewise $H^2$ on the regions separated by the shells,
is continuous across each sphere $r=R_j$, and satisfies
\begin{equation}\label{eq:jump-intro-model}
\partial_r u(R_j+0,\omega)-\partial_r u(R_j-0,\omega)
=
\alpha_j u(R_j,\omega),
\qquad
j=1,\ldots,N,
\quad
\omega\in S^2.
\end{equation}
Here $\partial_r u(R_j\pm0,\omega)$ denotes the radial derivative taken from
the exterior and interior sides of the sphere $r=R_j$.

Let $z\in\C\setminus[0,\infty)$. If $K_N(z)$ is invertible, then
\begin{equation}\label{eq:resolvent}
(H-z)^{-1}
=
(H_0-z)^{-1}
-\Gamma(z)\Theta K_N(z)^{-1}\Gamma(\overline{z})^*.
\end{equation}
Here $\Gamma(\overline{z})^*$ denotes the adjoint of $\Gamma(\overline{z})$.
Moreover,
$$
(H-z)^{-1}-(H_0-z)^{-1}
$$
is trace class.

Finally, for $z\in\C\setminus[0,\infty)$, the operator $K_N(z)$ is
noninvertible if and only if $z\in\sigma_{\mathrm p}(H)$, where
$\sigma_{\mathrm p}(H)$ denotes the point spectrum of $H$.
\end{thm}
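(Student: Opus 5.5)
Since Theorem~\ref{thm:input} is quoted from \cite{KaminagaAAMP2026}, I would not rebuild the construction from scratch. Instead I would check each assertion directly in the concentric, constant-strength setting, in four steps: form, domain, resolvent formula, point spectrum.

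\textbf{Form and domain.} Closedness and lower semiboundedness follow from the trace inequality. For every $\varepsilon>0$,
\[
\|u|_{S_j}\|^2_{L^2(S_j)} \le \varepsilon\|\nabla u\|^2+C_\varepsilon\|u\|^2 .
\]
So the shell terms are form bounded with relative bound zero with respect to the Dirichlet form, and the KLMN theorem yields the self-adjoint operator $H$. For the domain I would take $u\in D(H)$ and test $h[u,v]=(Hu,v)$ first against $v\in C_c^\infty$ supported away from the shells. This gives $-\Delta u=Hu$ in each region, and elliptic regularity makes $u$ piecewise $H^2$. Testing next against general $v\in H^1$ and integrating by parts region by region, the surface terms must cancel the $\alpha_j$ terms. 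This forces the jump condition \eqref{eq:jump-intro-model}. Continuity across each sphere holds because $u\in H^1$. The converse is the same integration by parts read backwards.

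\textbf{Resolvent formula.} For $f\in L^2$ I would set
\[
u=R_0(z)f-\Gamma(z)\Theta K_N(z)^{-1}\Gamma(\overline z)^*f .
\]
Note that $\Gamma(\overline z)^*f=(\tau_jR_0(z)f)_j$ collects the traces of $R_0(z)f$ on the shells. The single-layer potential $\Gamma_j(z)\varphi$ is continuous across $S_j$, solves $(-\Delta-z)w=0$ off $S_j$, and its normal derivative jumps by $-R_j^{-2}\varphi$; the factor $R_j^{-2}$ comes from $d\sigma_j=R_j^2\,d\omega$. Put $\psi=K_N^{-1}\Gamma(\overline z)^*f$. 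Then the trace vector of $u$ on the shells is
\[
\Gamma(\overline z)^*f-m\Theta\psi=\psi .
\]
The jump of $\partial_r u$ across $S_j$ is $+\alpha_jR_j^2R_j^{-2}\psi_j=\alpha_j u|_{S_j}$. Hence $u\in D(H)$ and $(H-z)u=f$.

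\textbf{Trace class.} The difference $R(z)-R_0(z)$ is the composition of $\Gamma(z)$, the bounded map $\Theta K_N^{-1}$, and $\Gamma(\overline z)^*$. The trace map $H^{2}\to L^2(S_j)$ factors through $H^{3/2}(S_j)$. Since $\Gamma(\overline z)^*$ maps $L^2$ into $H^{3/2}(S^2)^N$ and the embedding $H^{3/2}(S^2)\hookrightarrow L^2(S^2)$ is Schatten class $p$ for $p>1$, the product lies in $\mathfrak S_p\cdot\mathfrak S_p$, which is trace class when $p<2$. An alternative route, which I find cleaner here, is to decompose into spherical harmonics. Each channel gives a rank-$\le N$ difference whose norm is controlled by $\jj{\ell}\hj{\ell}$ products decaying like $\ell^{-1}(R_i/R_j)^{\ell}$. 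I expect this summability bookkeeping to be the main obstacle: the diagonal terms decay only like $1/\ell$, so I would use the Schatten-product argument for the diagonal part.

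\textbf{Point spectrum.} Suppose $K_N(z)\psi=0$ with $\psi\neq0$. Then $u=\Gamma(z)\Theta\psi$ lies in $L^2$ because it decays exponentially, as $\Im\sqrt z>0$. By the jump computation above it satisfies $(H-z)u=0$. It is nonzero, since its traces equal $-m\Theta\psi=\psi$.

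Conversely, suppose $Hu=zu$ with $u\neq0$. I would represent $u$ through Green's formula on each region as $u=-\Gamma(z)\Theta\,(\tau_ju)_j$. Taking traces gives $K_N(z)(\tau u)=0$. The trace vector $\tau u$ is nonzero, because otherwise $u$ would be a free eigenfunction, which cannot exist.
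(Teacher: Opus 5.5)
The paper gives no proof of Theorem~\ref{thm:input}; it imports all of its assertions from \cite{KaminagaAAMP2026}. You instead reconstruct them directly in the concentric, constant-strength setting. Your ingredients are KLMN, the jump relations of $\Gamma_j(z)$, a Schatten factorization through $\Gamma(\overline z)^*=(\tau_jR_0(z))_j$, and the correspondence $u\leftrightarrow\tau u$ between eigenfunctions and $\ker K_N(z)$. This is a legitimate, self-contained alternative, and the core identities check out:
\begin{itemize}
\item $\Gamma(\overline z)^*f=(\tau_jR_0(z)f)_j$ and $\tau u=\psi$;
\item the jump of $\partial_r u$ across $S_j$ equals $\alpha_j\psi_j$;
\item an eigenfunction satisfies $u=-\Gamma(z)\Theta\tau u$, because $u+\Gamma(z)\Theta\tau u$ has no jumps and is an $L^2$ solution of the free equation.
\end{itemize}
As written, however, three steps need repair.

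(i) The Schatten exponent is wrong. The embedding $H^{3/2}(S^2)\hookrightarrow L^2(S^2)$ has singular values $(1+\ell(\ell+1))^{-3/4}$ with multiplicity $2\ell+1$. So it lies in $\mathfrak S_p$ only for $p>4/3$, not for all $p>1$. Your conclusion survives, since you can take $p\in(4/3,2]$ for both $\Gamma(z)$ and $\Gamma(\overline z)^*$.

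The same count shows the channel route is not an obstacle either. Each outer factor has norm of order $\ell^{-3/2}$ in channel $\ell$. Hence the $\ell$-block of the resolvent difference has trace norm $O(\ell^{-3})$, which is summable against the multiplicity $2\ell+1$. The $1/\ell$ decay of $m_\ell$ enters only through $K_\ell^{-1}$, and that is uniformly bounded.

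(ii) The point-spectrum argument is missing a Fredholm step. You start from $K_N(z)\psi=0$, but on the infinite-dimensional space $L^2(S^2)^N$ noninvertibility does not by itself produce a kernel vector. You need the Fredholm alternative, using that $m(z)\Theta$ is compact. More concretely here, $K_N(z)\cong\bigoplus_\ell K_\ell(z)\otimes I_{2\ell+1}$ with $K_\ell(z)\to I_N$ as $\ell\to\infty$. Hence $K_N(z)$ is invertible iff every $K_\ell(z)$ is, and otherwise it has a kernel vector $cY_{\ell m}$. Also, with $u=\Gamma(z)\Theta\psi$ the trace vector is $m\Theta\psi=-\psi$, not $\psi$; this sign slip is harmless.

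(iii) The domain step does not deliver regularity up to the shells. Testing against $v$ supported off the shells yields only $u\in H^2_{\rm loc}$ in each open region. Near the spheres, $\partial_ru(R_j\pm0)$ is a priori only in $H^{-1/2}$, so ``elliptic regularity'' does not give piecewise $H^2$ without a transmission-regularity argument.

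The clean fix is to reorder the steps:
\begin{itemize}
\item Integration by parts shows that the operator $\widetilde H$ defined by the interface conditions satisfies $\widetilde H\subset H$.
\item At a non-real $z$, your eigenfunction argument gives $\ker K_N(z)=\{0\}$, hence invertibility by (ii).
\item Your resolvent construction then shows that $\widetilde H-z$ is onto. Together with $\widetilde H\subset H$ and $z\in\rho(H)$, this forces $D(\widetilde H)=D(H)$.
\end{itemize}
For the construction you must also know that $\Gamma(z)\Theta\psi$ is piecewise $H^2$. This requires $\psi=K_N(z)^{-1}\Gamma(\overline z)^*f\in H^{1/2}(S^2)^N$ at least. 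In the concentric case this is easy: $K_N(z)^{-1}$ is block diagonal with uniformly bounded blocks, so it preserves $H^{3/2}(S^2)^N$.

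Finally, for real negative $z$, surjectivity of $H-z$ gives injectivity, because $\ker(H-z)=\operatorname{ran}(H-z)^\perp$. So $z\in\rho(H)$, and your $u$ is indeed $(H-z)^{-1}f$.
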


Theorem~\ref{thm:input} is a specialization of the self--adjoint realization,
the boundary resolvent formula, and the spectral characterization of the
boundary operator proved in \cite{KaminagaAAMP2026}.

In particular, since $H$ is self--adjoint, $i\notin\sigma_{\mathrm p}(H)$,
and hence $K_N(i)$ is invertible. The trace--class resolvent difference in
Theorem~\ref{thm:input} will be used in Section~\ref{sec:scatt} to obtain
the existence and completeness of the wave operators.

\subsection{Partial--wave reduction}

Since $\alpha_1,\ldots,\alpha_N$ are real constants, both $H$ and the boundary
operator $m(z)$ are rotationally symmetric. Accordingly, on each copy of
$L^2(S^2)$, the operator $m(z)$ is diagonal with respect to the spherical
harmonic decomposition, and the same is true for the direct sum
$$
L^2(S^2)\oplus\cdots\oplus L^2(S^2).
$$

Let $\{Y_{\ell m}\}_{\ell\ge0,\,-\ell\le m\le\ell}$ be an orthonormal basis
of spherical harmonics on $S^2$. For each $\ell\ge0$, the restriction of
$m(z)$ to the $\ell$--th spherical harmonic sector is described by an
$N\times N$ complex matrix, which we denote by $m_\ell(z)$. We denote by
$K_{N,\ell}(z)$ the $\ell$--th partial--wave component of the boundary
operator matrix $K_N(z)$ introduced in \eqref{eq:Theta}. For simplicity, we
write $K_\ell(z)$ for $K_{N,\ell}(z)$ in the rest of the paper.

\begin{lem}\label{lem:mell}
Let $z\in\C\setminus[0,\infty)$ and set $k=\sqrt{z}$ with $\Im k>0$.
Then $m_\ell(z)$ is the $N\times N$ matrix whose entries are given by
\begin{equation}\label{eq:mell-entry}
(m_\ell(z))_{ij}
=
ik\,\jj{\ell}\!\bigl(k\min\{R_i,R_j\}\bigr)\,
\hj{\ell}\!\bigl(k\max\{R_i,R_j\}\bigr),
\qquad 1\le i,j\le N,
\end{equation}
where $\jj{\ell}=j_\ell$ is the spherical Bessel function and
$\hj{\ell}=h_\ell^{(1)}$ is the outgoing spherical Hankel function.
We also denote by $h_\ell^{(2)}$ the incoming spherical Hankel function.
In particular, $(m_\ell(z))_{ij}=(m_\ell(z))_{ji}$.
Moreover, the $\ell$--th partial--wave component of $K_N(z)$ is
\begin{equation}\label{eq:Kell}
K_\ell(z)=I_N+m_\ell(z)\Theta,
\end{equation}
where $I_N$ denotes the $N\times N$ identity matrix.
\end{lem}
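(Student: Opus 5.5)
The plan is to expand the free Green function in spherical harmonics and apply the addition theorem. For $k=\sqrt{z}$ with $\Im k>0$ and $|x|\neq|y|$, the classical expansion is
\begin{equation*}
\frac{e^{ik|x-y|}}{4\pi|x-y|}
=
ik\sum_{\ell=0}^\infty \jj{\ell}(k r_<)\,\hj{\ell}(k r_>)\sum_{m=-\ell}^{\ell}Y_{\ell m}(\hat x)\overline{Y_{\ell m}(\hat y)},
\end{equation*}
where $r_<=\min\{|x|,|y|\}$ and $r_>=\max\{|x|,|y|\}$. I will take this expansion as known, citing it or deriving it quickly from the radial Green function of $-\frac{d^2}{dr^2}-\frac{2}{r}\frac{d}{dr}+\frac{\ell(\ell+1)}{r^2}-k^2$. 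That Green function equals $ik\,\jj{\ell}(kr_<)\hj{\ell}(kr_>)$, using the Wronskian $\jj{\ell}\,\hj{\ell}{}'-\jj{\ell}{}'\,\hj{\ell}=i/(kr)^2$ in the variable $kr$. For $\Im k>0$ the factor $\hj{\ell}(kr)$ decays exponentially, which gives the correct $L^2$ resolvent.

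Setting $x=R_i\omega$ and $y=R_j\omega'$, the operator $m_{ij}(z)$ is an integral operator on $L^2(S^2)$. Its kernel is this series evaluated at $r_<=k\min\{R_i,R_j\}$ and $r_>=k\max\{R_i,R_j\}$ (inside the Bessel and Hankel arguments). Applying it to $Y_{\ell m}$ and using orthonormality then gives $m_{ij}(z)Y_{\ell m}=ik\,\jj{\ell}(k\min\{R_i,R_j\})\hj{\ell}(k\max\{R_i,R_j\})\,Y_{\ell m}$. This is exactly \eqref{eq:mell-entry}, and it is independent of $m$.

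The main obstacle is the diagonal case $i=j$, where $|x|=|y|=R_i$ and the kernel has the weak singularity $1/|\omega-\omega'|$, so the series does not converge pointwise. I would handle this in one of two ways. One route is to compute $m_{ii}(z)Y_{\ell m}$ as the limit $r\to R_i$ of $\Gamma_i(z)Y_{\ell m}(r\omega)$, using the continuity of the single-layer potential across the sphere. The other is to use the Funk--Hecke formula directly: the kernel is a zonal $L^1$ function of $\omega\cdot\omega'$, so $Y_{\ell m}$ are its eigenfunctions. Their eigenvalue is the continuous limit of the off-diagonal formula, since both sides are continuous in $R_i$. In either form the eigenvalue is $ik\,\jj{\ell}(kR_i)\hj{\ell}(kR_i)$.

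The symmetry $(m_\ell)_{ij}=(m_\ell)_{ji}$ is immediate because min and max are symmetric in $i$ and $j$. Finally, $\Theta$ acts on each component as multiplication by the scalar $\alpha_jR_j^2$, so it commutes with the spherical harmonic decomposition. Restricting $K_N(z)=I+m(z)\Theta$ to the $(\ell,m)$ sector therefore gives $I_N+m_\ell(z)\Theta$, which is \eqref{eq:Kell}.
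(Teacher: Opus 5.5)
Your proposal is correct and follows the same route as the paper: expand $G_z$ in spherical harmonics, set $x=R_i\omega$ and $y=R_j\omega'$, integrate against $Y_{\ell m}$ using orthonormality, and observe that $\Theta$ acts only on the shell index. Your separate treatment of the diagonal entries $i=j$, via continuity of the single-layer potential across $S_i$ or via Funk--Hecke, justifies a step that the paper handles by integrating the series term by term without comment.
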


\begin{proof}
Let
$$
r_<:=\min\{|x|,|y|\},
\qquad
r_>:=\max\{|x|,|y|\},
$$
and write
$$
\widehat x=\frac{x}{|x|},
\qquad
\widehat y=\frac{y}{|y|}.
$$
The spherical harmonic expansion of the free Green function reads
$$
G_z(x,y)
=
ik\sum_{n=0}^\infty\sum_{q=-n}^n
j_n(kr_<)\,h_n^{(1)}(kr_>)\,
Y_{nq}(\widehat x)\overline{Y_{nq}(\widehat y)}.
$$
Hence, for $\varphi=Y_{\ell m}$, we obtain
\begin{eqnarray*}
(m_{ij}(z)Y_{\ell m})(\omega)
&=&
\int_{S^2}G_z(R_i\omega,R_j\omega')Y_{\ell m}(\omega')\,d\omega'
\\
&=&
ik\,j_\ell\!\bigl(k\min\{R_i,R_j\}\bigr)\,
h_\ell^{(1)}\!\bigl(k\max\{R_i,R_j\}\bigr)\,
Y_{\ell m}(\omega),
\end{eqnarray*}
by orthonormality of the spherical harmonics. This proves
\eqref{eq:mell-entry}.

Since $\Theta=\diag(\alpha_1R_1^2,\ldots,\alpha_NR_N^2)$ acts only on the
shell index and does not mix spherical harmonics, the restriction of
$$
K_N(z)=I+m(z)\Theta
$$
to the $\ell$--th partial wave is exactly
$$
K_\ell(z)=I_N+m_\ell(z)\Theta.
$$
This completes the proof.
\end{proof}

Thus, in each angular momentum channel, the boundary operator matrix $K_N(z)$
reduces to the finite matrix $K_\ell(z)$.
\section{Scattering matrices for finitely many concentric shells}
\label{sec:scatt}

\subsection{Free spectral representation and partial--wave decomposition}

To describe scattering, we work in the standard spectral representation of the
free Hamiltonian $H_0=-\Delta$; see, for example,
\cite[Ch.~XI]{ReedSimonIII} and \cite{Yafaev1992}.

At a fixed energy $k^2>0$, let $u_{\rm in}$ be a fixed free solution of
$$
(-\Delta-k^2)u_{\rm in}=0
\qquad \text{in }\R^3.
$$
We call a solution $u$ of
$$
(H-k^2)u=0
$$
an outgoing solution with incident part $u_{\rm in}$ if $u_{\rm in}$ is a
free solution of
$$
(-\Delta-k^2)u_{\rm in}=0
$$
and $u-u_{\rm in}$ satisfies the Sommerfeld radiation condition
$$
\lim_{r\to\infty} r(\partial_r-ik)(u-u_{\rm in})(x)=0.
$$
In this case, $u_{\rm in}$ is called the incident wave.
Here $r=|x|$ and $\partial_r$ denotes the radial derivative. 
In the present partial--wave setting, we will take the incident part to be
$$
j_\ell(k|x|)Y_{\ell m}(\widehat x),
$$
where $\widehat x=x/|x|$. Since the spherical harmonics form a complete
system on $S^2$ and the problem is rotationally invariant, it suffices to
consider incident waves of this form. The corresponding outgoing term will
be described by outgoing spherical Hankel functions.

Let $\widehat f$ denote the Fourier transform of $f\in L^2(\R^3)$,
$$
\widehat f(\xi)
=
(2\pi)^{-3/2}\int_{\R^3}e^{-ix\cdot\xi}f(x)\,dx.
$$
We define
$$
(\mathcal F_0 f)(k,\omega)=\widehat f(k\omega),
\qquad
k>0,\ \omega\in S^2,
$$
initially for sufficiently regular $f$. Then, by Plancherel's theorem and the spherical change of variables,
$\mathcal F_0$ extends to a unitary operator from $L^2(\R^3)$ onto
$$
L^2\bigl((0,\infty),k^2dk;L^2(S^2,d\omega)\bigr),
$$
where $d\omega$ denotes the standard surface measure on the unit sphere $S^2$.
In this representation, $H_0=-\Delta$ is diagonalized as multiplication by
$k^2$, that is,
$$
(\mathcal F_0 H_0 f)(k,\omega)
=
k^2(\mathcal F_0 f)(k,\omega),
\qquad f\in D(H_0).
$$

For each $k>0$, the fiber space is $L^2(S^2)$, and the spherical harmonic
decomposition gives
$$
L^2(S^2)
=
\bigoplus_{\ell=0}^\infty {\cal H}_\ell,
\qquad
{\cal H}_\ell
=
\operatorname{span}\{Y_{\ell m}:-\ell\le m\le\ell\}.
$$
Equivalently, for each fixed $k>0$,
$$
(\mathcal F_0 f)(k,\omega)
=
\sum_{\ell=0}^\infty\sum_{m=-\ell}^{\ell}
(\mathcal F_0 f)(k,\ell,m)\,Y_{\ell m}(\omega),
$$
where
$$
(\mathcal F_0 f)(k,\ell,m)
=
\int_{S^2}(\mathcal F_0 f)(k,\omega)\,
\overline{Y_{\ell m}(\omega)}\,d\omega.
$$

Since the shell strengths $\alpha_1,\ldots,\alpha_N$ are constants, both $H$
and $H_0$ commute with the natural unitary action of the rotation group on
$L^2(\R^3)$. As a consequence, the scattering operator admits a partial--wave
decomposition in the free spectral representation. 

The next theorem records this partial--wave decomposition.

\begin{thm}\label{thm:wave}
The wave operators
$$
W_\pm(H,H_0)
=
\slim_{t\to\pm\infty} e^{itH}e^{-itH_0}
$$
exist and are complete. The scattering operator
$$
S=W_+(H,H_0)^*W_-(H,H_0)
$$
is unitary on $L^2(\R^3)$.
In the free spectral representation $\mathcal F_0$,
there exists a measurable family of unitary operators $S(k^2)$ on $L^2(S^2)$
such that
$$
(\mathcal F_0 S\mathcal F_0^{-1}g)(k,\omega)=S(k^2)g(k,\omega)
$$
for almost every $k>0$. Moreover, for each $\ell\ge0$ there exists a scalar
function $S_\ell(k)$, defined for almost every $k>0$, such that
\begin{equation}\label{eq:Sdiag}
(\mathcal F_0 S\mathcal F_0^{-1}g)(k,\ell,m)
=
S_\ell(k)\,g(k,\ell,m).
\end{equation}
Equivalently,
$$
S(k^2)\upharpoonright_{{\cal H}_\ell}=S_\ell(k)I_{{\cal H}_\ell}
$$
for almost every $k>0$.
\end{thm}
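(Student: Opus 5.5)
The plan has three steps: existence and completeness from the trace-class input, unitarity and fiber decomposition from abstract theory, and the partial-wave structure from rotational symmetry plus Schur's lemma.

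\emph{Existence and completeness.} We use the Birman--Kuroda theorem in its resolvent form. By Theorem~\ref{thm:input}, $(H-z)^{-1}-(H_0-z)^{-1}$ is trace class for $z\in\C\setminus[0,\infty)$ with $K_N(z)$ invertible, for instance $z=-\lambda$ with $\lambda$ large, or $z=i$. The Birman--Kuroda (or Kato--Rosenblum) theorem then gives that $W_\pm(H,H_0)$ exist and are complete; see \cite[Ch.~XI]{ReedSimonIII} and \cite{Yafaev1992}. Here complete means $\operatorname{Ran}W_\pm=\mathcal H_{\rm ac}(H)$. Since $H_0$ is purely absolutely continuous, $W_\pm$ are isometries with equal ranges. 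Hence $S=W_+^*W_-$ is unitary on $L^2(\R^3)$.

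\emph{Fiber decomposition.} The intertwining relations $W_\pm e^{-itH_0}=e^{-itH}W_\pm$ give $SH_0=H_0S$. So $S$ commutes with every bounded Borel function of $H_0$. In the representation $\mathcal F_0$, $H_0$ is multiplication by $k^2$ on $L^2((0,\infty),k^2dk;L^2(S^2))$. An operator commuting with all multiplications by bounded functions of $k$ is decomposable: there is a measurable family $S(k^2)$ of bounded operators on $L^2(S^2)$ with $(\mathcal F_0 S\mathcal F_0^{-1}g)(k)=S(k^2)g(k)$ for a.e.\ $k$. This is the standard direct-integral theorem for the commutant of a multiplication algebra. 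Unitarity of $S$ forces $S(k^2)$ to be unitary for a.e.\ $k$. To see this, apply $S^*S=SS^*=I$ to $g=\chi_B(k)\psi_n$, where $B$ ranges over Borel sets and $\psi_n$ over a countable dense set in $L^2(S^2)$. This gives $S(k^2)^*S(k^2)=S(k^2)S(k^2)^*=I$ off a null set.

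\emph{Partial waves.} Let $U(g)f(x)=f(g^{-1}x)$ for $g\in SO(3)$. Since the $\alpha_j$ are constant, the form $h$ in \eqref{eq:form} is rotation invariant, so $U(g)$ commutes with $H$ and with $H_0$. Consequently $U(g)$ commutes with $e^{itH}e^{-itH_0}$, with $W_\pm$, and with $S$. The Fourier transform commutes with rotations, so $\mathcal F_0U(g)\mathcal F_0^{-1}$ acts fiberwise by $\psi\mapsto\psi(g^{-1}\cdot)$ on $L^2(S^2)$. By uniqueness of the decomposition, $S(k^2)$ commutes with this action of $g$ for a.e.\ $k$. To keep the exceptional null set independent of $g$, we first take $g$ in a countable dense subgroup and then use strong continuity.

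Each $\mathcal H_\ell$ is an irreducible representation, and the $\mathcal H_\ell$ are mutually inequivalent. Schur's lemma therefore gives $S(k^2)\mathcal H_\ell\subset\mathcal H_\ell$ and $S(k^2)\upharpoonright_{\mathcal H_\ell}=S_\ell(k)I_{\mathcal H_\ell}$. Unitarity gives $|S_\ell(k)|=1$, and $S_\ell(k)=\langle S(k^2)Y_{\ell0},Y_{\ell0}\rangle$ is measurable in $k$. Expanding in $Y_{\ell m}$ then gives \eqref{eq:Sdiag}.

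\emph{Main obstacle.} The analytic input is entirely supplied by Theorem~\ref{thm:input}, so the delicate point is measure-theoretic. The decomposability and the Schur argument must hold outside a single null set of $k$, which the countable dense subgroup argument above takes care of.
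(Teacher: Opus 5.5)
Your proposal is correct and follows essentially the same route as the paper. Both arguments use the trace-class resolvent difference with Birman--Kuroda for existence, completeness and unitarity, then commutation of $S$ with $H_0$ for the fiber decomposition in $\mathcal F_0$, and finally rotation invariance with Schur's lemma on the irreducible spaces ${\cal H}_\ell$. Your version is slightly more careful than the paper's on two points: you invoke the mutual inequivalence of the ${\cal H}_\ell$ to get $S(k^2){\cal H}_\ell\subset{\cal H}_\ell$ before applying Schur, and you make the exceptional null set independent of $g$ via a countable dense subgroup.
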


\begin{proof}
By Theorem~\ref{thm:input},
$$
(H-i)^{-1}-(H_0-i)^{-1}
$$
is trace class. Hence the wave operators exist and are complete by the
Birman--Kuroda theorem; see \cite[Ch.~XI]{ReedSimonIII}.
Since $H_0=-\Delta$ has purely absolutely continuous spectrum, the
scattering operator
$$
S=W_+(H,H_0)^*W_-(H,H_0)
$$
is unitary on $L^2(\R^3)$.

Since both $H$ and $H_0$ commute with the action of the rotation group, the
wave operators and hence $S$ commute with rotations. Moreover, by the
intertwining property of the wave operators and the functional calculus,
$S$ commutes with the spectral measure of $H_0$. Therefore, in the free
spectral representation $\mathcal F_0$, the operator
$\mathcal F_0 S\mathcal F_0^{-1}$ is decomposable with respect to the energy
parameter $k^2$. Thus there exists a measurable family of unitary operators
$S(k^2)$ on $L^2(S^2)$ such that
$$
(\mathcal F_0 S\mathcal F_0^{-1}g)(k,\omega)=S(k^2)g(k,\omega)
$$
for almost every $k>0$.

Since $S$ commutes with rotations, each fiber operator $S(k^2)$ commutes with
the rotation action on $L^2(S^2)$ for almost every $k>0$. Because each
spherical harmonic subspace ${\cal H}_\ell$ is irreducible under rotations,
Schur's lemma implies that
$$
S(k^2)\upharpoonright_{{\cal H}_\ell}=S_\ell(k)I_{{\cal H}_\ell}
$$
for almost every $k>0$. This is equivalent to \eqref{eq:Sdiag}.
\end{proof}

\subsection{Outgoing solutions in a fixed partial wave}

We now fix a partial wave and construct the corresponding outgoing solutions.

\begin{lem}
\label{lem:Kell-boundary}
For each $\ell\ge0$ and each $k>0$, the boundary values
\[
m_\ell(k^2\pm i0),
\qquad
K_\ell(k^2\pm i0)
\]
exist. Moreover, if
\[
m_\ell^\pm(k):=m_\ell(k^2\pm i0),
\qquad
K_\ell^\pm(k):=K_\ell(k^2\pm i0),
\]
then
\[
m_\ell^-(k)=m_\ell^+(k)^*,
\]
and
\[
\det K_\ell^-(k)=\overline{\det K_\ell^+(k)}.
\]
\end{lem}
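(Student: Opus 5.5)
The plan is to use the explicit entries from Lemma~\ref{lem:mell}. For $z\in\C\setminus[0,\infty)$ with $k(z)=\sqrt z$, $\Im k(z)>0$, each entry of $m_\ell(z)$ is
\[
F_{ij}(k(z)):=ik\,\jj{\ell}(kr_<)\,\hj{\ell}(kr_>),\qquad r_<=\min\{R_i,R_j\},\ r_>=\max\{R_i,R_j\}.
\]
The first step is to note that $F_{ij}$ is a continuous function of $k$ on $\C\setminus\{0\}$; in fact it is entire. Indeed, $j_\ell$ is entire. Also $h_\ell^{(1)}(w)=e^{iw}P_\ell(1/w)$, where $P_\ell$ is a polynomial of degree $\ell+1$ with no constant term. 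So $k\,j_\ell(kr_<)h_\ell^{(1)}(kr_>)$ has at worst a pole at $k=0$, and that pole cancels because $j_\ell(kr_<)=O(k^\ell)$. For the lemma we only need continuity away from $k=0$, and this is immediate.

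Next I would track how $k(z)$ behaves as $z\to k^2\pm i0$ with $k>0$. When $z=k^2+i\varepsilon$, $\sqrt z\to k$. When $z=k^2-i\varepsilon$, the branch with $\Im\sqrt z>0$ gives $\sqrt z\to -k$. Continuity then shows that the boundary values exist and
\[
(m_\ell^+(k))_{ij}=F_{ij}(k),\qquad (m_\ell^-(k))_{ij}=F_{ij}(-k).
\]
Since $\Theta$ is a fixed real diagonal matrix, $K_\ell^\pm(k)=I_N+m_\ell^\pm(k)\Theta$ also exist.

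The remaining step is the identity $F_{ij}(-k)=\overline{F_{ij}(k)}$ for real $k>0$. I would prove it using the parity relations
\[
j_\ell(-w)=(-1)^\ell j_\ell(w),\qquad h_\ell^{(1)}(-w)=(-1)^\ell h_\ell^{(2)}(w),
\]
together with $h_\ell^{(2)}(x)=\overline{h_\ell^{(1)}(x)}$ and the fact that $j_\ell(x)$ is real for real $x$. These give
\[
F_{ij}(-k)=i(-k)(-1)^{2\ell}j_\ell(kr_<)\,\overline{h_\ell^{(1)}(kr_>)}=\overline{ik\,j_\ell(kr_<)h_\ell^{(1)}(kr_>)}.
\]
So $m_\ell^-(k)=\overline{m_\ell^+(k)}$ entrywise. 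Because $m_\ell$ is symmetric (Lemma~\ref{lem:mell}), $\overline{m_\ell^+(k)}=m_\ell^+(k)^*$. The main care point is getting the branch and the sign of $-k$ right. As a cross-check I would use the Green function directly: $\overline{e^{ik|x-y|}}=e^{-ik|x-y|}$.

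Finally, $\Theta$ is real, so $K_\ell^-(k)=I_N+\overline{m_\ell^+(k)}\Theta=\overline{K_\ell^+(k)}$ entrywise. The determinant is a polynomial with integer coefficients in the entries, so it commutes with complex conjugation, and $\det K_\ell^-(k)=\overline{\det K_\ell^+(k)}$.
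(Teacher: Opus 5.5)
Your proposal is correct and follows the paper's argument: the explicit entries from Lemma~\ref{lem:mell}, the branch $\sqrt{k^2-i0}=-k$, the parity relations for $j_\ell$ and $h_\ell^{(1)}$, and complex conjugation. The only difference is the determinant step. You note directly that $K_\ell^-(k)=\overline{K_\ell^+(k)}$ entrywise, since $\Theta$ is real, whereas the paper uses Sylvester's identity $\det(I+AB)=\det(I+BA)$ together with the adjoint. Your route is slightly more direct, and your explicit appeal to the symmetry of $m_\ell$ supplies the step from $\overline{m_\ell^+(k)}$ to $m_\ell^+(k)^*$, which the paper leaves implicit.
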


\begin{proof}
By Lemma~\ref{lem:mell}, the entries of $m_\ell(z)$ are given explicitly in
terms of spherical Bessel and Hankel functions. These expressions admit
boundary values on $(0,\infty)$, so that $m_\ell(k^2\pm i0)$ exist for every
$k>0$, and hence so do $K_\ell(k^2\pm i0)$.
For the upper boundary value, Lemma~\ref{lem:mell} gives
\[
(m_\ell^+(k))_{ij}
=
ik\,\jj{\ell}\bigl(k\min\{R_i,R_j\}\bigr)\,
\hj{\ell}\bigl(k\max\{R_i,R_j\}\bigr).
\]
For the lower boundary value, we approach the cut $(0,\infty)$ from the lower
half-plane while keeping the branch determined by $\Im\sqrt{z}>0$ on
$\C\setminus[0,\infty)$. Hence
\[
\sqrt{k^2-i0}=-k,
\qquad
\sqrt{k^2+i0}=k,
\]
for $k>0$.
Using
\[
\jj{\ell}(-t)=(-1)^\ell\jj{\ell}(t),
\qquad
\hj{\ell}(-t)=(-1)^\ell h_\ell^{(2)}(t),
\qquad t>0,
\]
we obtain
\[
(m_\ell^-(k))_{ij}
=
-ik\,\jj{\ell}\bigl(k\min\{R_i,R_j\}\bigr)\,
h_\ell^{(2)}\bigl(k\max\{R_i,R_j\}\bigr).
\]
Since $\jj{\ell}(t)$ is real for $t>0$ and
\[
\overline{\hj{\ell}(t)}=h_\ell^{(2)}(t),
\qquad t>0,
\]
it follows that
\[
m_\ell^-(k)=m_\ell^+(k)^*.
\]

By definition,
\[
K_\ell^\pm(k)=I_N+m_\ell^\pm(k)\Theta.
\]
Since $\Theta=\Theta^*$ is a real diagonal matrix, Sylvester's determinant
identity
\[
\det(I+AB)=\det(I+BA)
\]
yields
\begin{eqnarray*}
\det K_\ell^-(k)
&=&
\det\bigl(I_N+m_\ell^-(k)\Theta\bigr)
\\
&=&
\det\bigl(I_N+\Theta m_\ell^-(k)\bigr)
\\
&=&
\det\bigl(I_N+\Theta m_\ell^+(k)^*\bigr)
\\
&=&
\det\bigl((I_N+m_\ell^+(k)\Theta)^*\bigr) = \overline{\det K_\ell^+(k)}.
\end{eqnarray*}
This proves the claim.
\end{proof}

We construct the outgoing solution in the $(\ell,m)$ channel corresponding to
a given incident wave.

\begin{lem}
\label{lem:partial-wave-construction}
Fix $\ell\ge0$, fix $m$ with $-\ell\le m\le\ell$, and let
$k>0$ satisfy
\[
\det K_\ell(k^2+i0)\neq0.
\]
Write
\[
K_\ell^+(k):=K_\ell(k^2+i0),
\qquad
b_\ell(k)
=
{}^t\bigl(\jj{\ell}(kR_1),\ldots,\jj{\ell}(kR_N)\bigr),
\]
and define
\[
c_\ell(k):=K_\ell^+(k)^{-1}b_\ell(k)\in\C^N.
\]
Then there exists an outgoing solution in the $(\ell,m)$ channel whose
incident part is
\[
\jj{\ell}(k|x|)Y_{\ell m}(\widehat x).
\]
For $|x|=r>R_N$, this solution has the form
\begin{equation}\label{eq:ext-form-lemma}
u_{\ell m}^+(x,k)
=
\frac12
\Bigl(
h_\ell^{(2)}(kr)+\sigma_\ell(k)\hj{\ell}(kr)
\Bigr)
Y_{\ell m}(\widehat x),
\end{equation}
where
\begin{equation}\label{eq:sigma-lemma}
\sigma_\ell(k)
=
1-2ik\,{}^t b_\ell(k)\Theta K_\ell^+(k)^{-1}b_\ell(k).
\end{equation}
\end{lem}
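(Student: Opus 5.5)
We construct the solution explicitly as a free wave plus a single-layer potential, in the form suggested by the resolvent formula \eqref{eq:resolvent}. Concretely, we set
\[
u_{\ell m}^+(x,k)
=
\jj{\ell}(kr)Y_{\ell m}(\widehat x)
-\sum_{j=1}^N \alpha_jR_j^2\,c_j\,
ik\,\jj{\ell}(kr_<^{(j)})\,\hj{\ell}(kr_>^{(j)})\,Y_{\ell m}(\widehat x),
\]
where $r_<^{(j)}=\min\{r,R_j\}$ and $r_>^{(j)}=\max\{r,R_j\}$. This is the $(\ell,m)$ component of $\Gamma(k^2+i0)\Theta c$, with the density on the $j$-th shell equal to $c_jY_{\ell m}$. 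The coefficient vector $c\in\C^N$ is to be determined.

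\textbf{Local properties.} Away from the spheres, each term solves $(-\Delta-k^2)u=0$. At $r=0$ only $\jj{\ell}$ appears, so the function is regular there. It is continuous across every $r=R_j$. Across $r=R_i$, only the $i$-th layer term produces a jump in the radial derivative. By the Wronskian identity
\[
\jj{\ell}(t)\hj{\ell}{}'(t)-\jj{\ell}{}'(t)\hj{\ell}(t)=\frac{i}{t^2},
\]
the single layer $ik\,\jj{\ell}(kr_<)\hj{\ell}(kr_>)$ has derivative jump
\[
ik\cdot k\cdot\frac{i}{k^2R_i^2}=-\frac{1}{R_i^2}.
\]
Hence the jump of $\partial_r u$ at $R_i$ equals $\alpha_i c_i$ times $Y_{\ell m}$.

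\textbf{Determining $c$.} The interface condition \eqref{eq:jump-intro-model} requires this jump to equal $\alpha_i u(R_i)$. Evaluating the formula at $r=R_i$ and using Lemma~\ref{lem:mell} gives
\[
u(R_i)=\bigl(b_\ell(k)-m_\ell^+(k)\Theta c\bigr)_i .
\]
So it suffices that $c=b_\ell-m_\ell^+\Theta c$, that is, $K_\ell^+(k)c=b_\ell(k)$. Under the assumption $\det K_\ell^+(k)\neq0$ this gives $c=c_\ell(k)$.

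Some care is needed when $\alpha_i=0$. In that case the layer term for shell $i$ is multiplied by $\alpha_iR_i^2=0$, so the construction is consistent for any value of $c_i$.

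The resulting $u$ is piecewise smooth, continuous, and satisfies the jump conditions. So it solves $(H-k^2)u=0$ in the sense of the interface description in Theorem~\ref{thm:input}, locally.

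\textbf{Radiation condition and exterior form.} For $r>R_N$ every layer term equals $ik\,\jj{\ell}(kR_j)\hj{\ell}(kr)$. Therefore
\[
u-\jj{\ell}(kr)Y_{\ell m}
=
-ik\,\bigl({}^t b_\ell\,\Theta c_\ell\bigr)\hj{\ell}(kr)Y_{\ell m}.
\]
Since $\hj{\ell}(kr)\sim(-i)^{\ell+1}e^{ikr}/(kr)$, this difference satisfies the Sommerfeld condition, and the incident part is $\jj{\ell}(kr)Y_{\ell m}$ as required. Writing $\jj{\ell}=\tfrac12(\hj{\ell}+h^{(2)}_\ell)$ gives
\[
u=\tfrac12\Bigl(h^{(2)}_\ell+\bigl(1-2ik\,{}^tb_\ell\Theta K_\ell^{+-1}b_\ell\bigr)\hj{\ell}\Bigr)Y_{\ell m},
\]
which is \eqref{eq:ext-form-lemma} with $\sigma_\ell$ as in \eqref{eq:sigma-lemma}.

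The main obstacle is the sign and normalization bookkeeping in the jump computation: the Wronskian constant, the factor $R_j^2$ that $\Theta$ absorbs from integrating over $S^2$ with $d\omega$, and the sign in front of $\Gamma\Theta$. I will check these against the requirement that the interface condition reduce exactly to $K_\ell^+c=b_\ell$.
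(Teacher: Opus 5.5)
Your proposal is correct and follows essentially the same route as the paper. You use the same ansatz $u=j_\ell(kr)Y_{\ell m}-\sum_j\theta_jc_j\Phi_j$ with $\Phi_j=ik\,j_\ell(kr_<)h^{(1)}_\ell(kr_>)Y_{\ell m}$, and the Wronskian gives the same derivative jump $-R_j^{-2}$. As in the paper, the interface condition reduces to $K_\ell^+(k)c=b_\ell(k)$, and the exterior form comes from $j_\ell=\tfrac12(h^{(1)}_\ell+h^{(2)}_\ell)$; the signs and normalizations you flagged all check out as you computed them.
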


\begin{proof}
By assumption, the matrix
\[
K_\ell^+(k)=K_\ell(k^2+i0)
\]
is invertible, so $c_\ell(k)$ is well defined.

Let
\[
m_\ell^+(k):=m_\ell(k^2+i0),
\]
and let
\[
G_k^+(x,y):=\frac{e^{ik|x-y|}}{4\pi|x-y|}.
\]
For $j=1,\ldots,N$, define
\[
\Phi_j(x,k)
:=
\int_{S^2} G_k^+(x,R_j\omega')\,Y_{\ell m}(\omega')\,d\omega'.
\]
Equivalently,
\[
\Phi_j(x,k)
=
R_j^{-2}
\int_{S_j} G_k^+(x,y)\,Y_{\ell m}(y/R_j)\,d\sigma_j(y),
\]
where $d\sigma_j$ denotes the surface measure on $S_j$.

Thus $\Phi_j(\cdot,k)$ is smooth on $\R^3\setminus S_j$, satisfies
\[
(-\Delta-k^2)\Phi_j(\cdot,k)=0
\qquad \text{in }\R^3\setminus S_j,
\]
and is continuous across each sphere.

For $i\neq j$, the function $\Phi_j(\cdot,k)$ is smooth across $S_i$, hence
\[
\partial_r\Phi_j(R_i+0,\omega,k)-\partial_r\Phi_j(R_i-0,\omega,k)=0.
\]

For $i=j$, using the explicit formulas for $\Phi_j$ inside and outside $S_j$,
which will be derived below in \eqref{eq:Phi-exterior-lemma} and
\eqref{eq:Phi-interior-lemma}, we obtain, at $r=R_j$,
\begin{align*}
&\partial_r\Phi_j(R_j+0,\omega,k)-\partial_r\Phi_j(R_j-0,\omega,k)
\\
&\qquad
=
ik^2\Bigl(
j_\ell(kR_j)\bigl(h_\ell^{(1)}\bigr)'(kR_j)
-
h_\ell^{(1)}(kR_j)j_\ell'(kR_j)
\Bigr)Y_{\ell m}(\omega).
\end{align*}
Using the Wronskian identity
\[
j_\ell(t)\bigl(h_\ell^{(1)}\bigr)'(t)-j_\ell'(t)h_\ell^{(1)}(t)
=
\frac{i}{t^2},
\]
we obtain
\[
\partial_r\Phi_j(R_j+0,\omega,k)-\partial_r\Phi_j(R_j-0,\omega,k)
=
-R_j^{-2}Y_{\ell m}(\omega).
\]
Therefore
\begin{equation}\label{eq:jump-Phi-lemma}
\partial_r\Phi_j(R_i+0,\omega,k)-\partial_r\Phi_j(R_i-0,\omega,k)
=
-\delta_{ij}R_i^{-2}Y_{\ell m}(\omega).
\end{equation}

We next record two formulas for $\Phi_j$.

First, evaluating $\Phi_j$ on $S_i$ and using Lemma~\ref{lem:mell}, we obtain
\begin{equation}\label{eq:trace-Phi-lemma}
\Phi_j(R_i\omega,k)
=
(m_\ell^+(k))_{ij}Y_{\ell m}(\omega),
\qquad 1\le i,j\le N.
\end{equation}

Second, for $|x|=r>R_j$, the standard partial-wave expansion of the free Green
function yields
\[
G_k^+(x,R_j\omega')
=
ik\sum_{n=0}^\infty \sum_{q=-n}^n
j_n(kR_j)\,h_n^{(1)}(kr)\,
Y_{nq}(\widehat x)\overline{Y_{nq}(\omega')}.
\]
Multiplying by $Y_{\ell m}(\omega')$ and integrating over $S^2$, orthonormality of
the spherical harmonics gives
\begin{equation}\label{eq:Phi-exterior-lemma}
\Phi_j(x,k)
=
ik\,\jj{\ell}(kR_j)\hj{\ell}(kr)\,Y_{\ell m}(\widehat x),
\qquad |x|=r>R_j.
\end{equation}
Similarly, for $|x|=r<R_j$, the same partial-wave expansion gives
\begin{equation}\label{eq:Phi-interior-lemma}
\Phi_j(x,k)
=
ik\,\hj{\ell}(kR_j)\,\jj{\ell}(kr)\,Y_{\ell m}(\widehat x),
\qquad |x|=r<R_j.
\end{equation}
Hence $\Phi_j(\cdot,k)$ lies in the fixed $(\ell,m)$ channel on each annulus.

Now set
\begin{equation}\label{eq:scatt-sol-lemma}
u_{\ell m}^+(x,k)
=
\jj{\ell}(k|x|)Y_{\ell m}(\widehat x)
-
\sum_{j=1}^N \theta_j c_{\ell,j}(k)\,\Phi_j(x,k),
\end{equation}
where
\[
\theta_j=\alpha_jR_j^2,
\qquad j=1,\ldots,N,
\]
so that
\[
\Theta=\diag(\theta_1,\ldots,\theta_N).
\]
By \eqref{eq:Phi-exterior-lemma} and \eqref{eq:Phi-interior-lemma},
the function $u_{\ell m}^+(\cdot,k)$ also lies in the fixed $(\ell,m)$
channel on each annulus.

Since $\jj{\ell}(k|x|)Y_{\ell m}(\widehat x)$ is regular at the origin and each
$\Phi_j(\cdot,k)$ is smooth near the origin, the function
$u_{\ell m}^+(\cdot,k)$ is regular at $x=0$. Moreover,
\[
(-\Delta-k^2)u_{\ell m}^+(\cdot,k)=0
\qquad \text{in }\R^3\setminus\bigcup_{j=1}^N S_j.
\]

We verify the interface conditions. Evaluating
\eqref{eq:scatt-sol-lemma} on $S_i$ and using
\eqref{eq:trace-Phi-lemma}, we get
\[
u_{\ell m}^+(R_i\omega,k)
=
\Bigl(
\jj{\ell}(kR_i)-\sum_{j=1}^N (m_\ell^+(k)\Theta)_{ij}c_{\ell,j}(k)
\Bigr)Y_{\ell m}(\omega).
\]
Since
\[
K_\ell^+(k)c_\ell(k)=b_\ell(k),
\]
this becomes
\[
u_{\ell m}^+(R_i\omega,k)=c_{\ell,i}(k)Y_{\ell m}(\omega).
\]

On the other hand, \eqref{eq:jump-Phi-lemma} gives
\begin{align*}
&\partial_r u_{\ell m}^+(R_i+0,\omega,k)
-
\partial_r u_{\ell m}^+(R_i-0,\omega,k)
\\
&\qquad
=
-\sum_{j=1}^N \theta_j c_{\ell,j}(k)
\bigl(
\partial_r\Phi_j(R_i+0,\omega,k)
-
\partial_r\Phi_j(R_i-0,\omega,k)
\bigr)
\\
&\qquad
=
\theta_i c_{\ell,i}(k)R_i^{-2}Y_{\ell m}(\omega)
=
\alpha_i c_{\ell,i}(k)Y_{\ell m}(\omega)
=
\alpha_i\,u_{\ell m}^+(R_i\omega,k).
\end{align*}
Thus $u_{\ell m}^+(\cdot,k)$ satisfies the transmission conditions for the
$\delta$-shell interaction and solves
\[
(H-k^2)u_{\ell m}^+(\cdot,k)=0
\]
in the sense of distribution.

We next determine its exterior form. For $|x|=r>R_N$,
\eqref{eq:Phi-exterior-lemma} yields
\[
u_{\ell m}^+(x,k)
=
\Bigl(
\jj{\ell}(kr)
-
ik\,\hj{\ell}(kr)\,{}^t b_\ell(k)\Theta c_\ell(k)
\Bigr)Y_{\ell m}(\widehat x).
\]
Using
\[
\jj{\ell}(kr)=\frac12\bigl(\hj{\ell}(kr)+h_\ell^{(2)}(kr)\bigr),
\]
we obtain
\[
u_{\ell m}^+(x,k)
=
\frac12
\Bigl(
h_\ell^{(2)}(kr)+\sigma_\ell(k)\hj{\ell}(kr)
\Bigr)
Y_{\ell m}(\widehat x),
\qquad r>R_N,
\]
where
\[
\sigma_\ell(k)
=
1-2ik\,{}^t b_\ell(k)\Theta c_\ell(k).
\]
Since
\[
c_\ell(k)=K_\ell^+(k)^{-1}b_\ell(k),
\]
this is exactly \eqref{eq:sigma-lemma}.
By construction, the exterior part involves only $h_\ell^{(1)}(kr)$,
hence the solution is outgoing in the sense of the Sommerfeld
radiation condition.
\end{proof}
We show that the outgoing solution in the $(\ell,m)$ channel is unique.

\begin{lem}
\label{lem:no-homogeneous-outgoing}
Fix $\ell\ge0$, fix $m$ with $-\ell\le m\le\ell$, and let $k>0$.
Let $w$ be a function in the $(\ell,m)$ channel such that:

\begin{itemize}
\item[(i)]
$w$ is regular at the origin,

\item[(ii)]
\[
(-\Delta-k^2)w=0
\qquad \text{in }\R^3\setminus\bigcup_{j=1}^N S_j,
\]

\item[(iii)]
$w$ is continuous across each sphere $S_j$ and satisfies
\[
\partial_r w(R_j+0,\omega)-\partial_r w(R_j-0,\omega)
=
\alpha_j w(R_j,\omega),
\qquad j=1,\ldots,N,
\]

\item[(iv)]
$w$ is outgoing and has zero incident part, that is,
for $r>R_N$ one has
\[
w(x)=\gamma\,\hj{\ell}(kr)Y_{\ell m}(\widehat x)
\]
for some $\gamma\in\C$.
\end{itemize}

Then
\[
w\equiv0
\qquad \text{on }\R^3.
\]
\end{lem}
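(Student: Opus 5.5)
Write $w(x)=f(r)Y_{\ell m}(\widehat x)$, so the problem becomes a radial ODE. On each annulus $R_{j}<r<R_{j+1}$ (with $R_0=0$, $R_{N+1}=\infty$), $f$ solves the spherical Bessel equation
\[
f''+\tfrac{2}{r}f'+\Bigl(k^2-\tfrac{\ell(\ell+1)}{r^2}\Bigr)f=0.
\]
Hence $f$ is a real-coefficient-free combination of $j_\ell(kr)$ and $y_\ell(kr)$ on each annulus. Regularity at the origin forces $f=A\,j_\ell(kr)$ on $(0,R_1)$. Continuity of $f$ and the jump condition $f'(R_j+0)-f'(R_j-0)=\alpha_jf(R_j)$ then propagate $f$ uniquely across each shell. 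Since $\alpha_j$ and $k$ are real, this propagation preserves real-valuedness up to the overall factor $A$. Consequently $f=A\,g$, where $g$ is a real-valued function equal to $j_\ell(kr)$ near $0$.

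The key step is a flux (Wronskian) argument. For two solutions $f_1,f_2$ of the radial problem, define
\[
W[f_1,f_2](r)=r^2\bigl(f_1f_2'-f_1'f_2\bigr).
\]
This quantity is constant on each annulus. It is also continuous across each shell, because the jumps of $f_1',f_2'$ are $\alpha_jf_1(R_j)$ and $\alpha_jf_2(R_j)$, and these cancel in the combination. Apply this with $f_1=f$ and $f_2=\overline f$. Near $0$ both are bounded, with $r^2 f'\to0$, so $W[f,\overline f]=0$ near $0$, and therefore for all $r$. For $r>R_N$ we have $f=\gamma h_\ell^{(1)}(kr)$ and $\overline f=\overline\gamma h_\ell^{(2)}(kr)$. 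The Wronskian of $h^{(1)}_\ell$ and $h^{(2)}_\ell$ gives
\[
W[f,\overline f]=|\gamma|^2 k^2r^2\,W_t\bigl(h^{(1)}_\ell,h^{(2)}_\ell\bigr)(kr)\cdot k^{-1}=-\tfrac{2i}{k}|\gamma|^2,
\]
up to the sign convention. This forces $\gamma=0$.

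Once $\gamma=0$, $f$ vanishes identically on $(R_N,\infty)$, so $f(R_N)=f'(R_N+0)=0$. The jump condition gives $f'(R_N-0)=f'(R_N+0)-\alpha_N f(R_N)=0$, and continuity gives $f(R_N-0)=0$. Uniqueness for the second-order ODE on $(R_{N-1},R_N)$ then gives $f\equiv0$ there. Inducting inward over the shells, $f\equiv0$ on $(0,\infty)$, and hence $w\equiv0$.

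The only delicate point is the Wronskian computation itself: checking the normalization $j_\ell y_\ell'-j_\ell'y_\ell=t^{-2}$, which gives $h^{(1)}_\ell h^{(2)\prime}_\ell-h^{(1)\prime}_\ell h^{(2)}_\ell=-2i/t^2\neq0$. One must also confirm that the boundary term at $0$ vanishes when $\ell\ge1$; this holds because $f\sim r^\ell$ there, so $r^2 f f'\to 0$. Everything else is routine ODE propagation.
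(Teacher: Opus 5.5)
Your proof is correct and is essentially the paper's argument in radial form: the conserved quantity $W[f,\overline f]=-2i\,r^2\Im(\overline f f')$ is exactly $-2i$ times the imaginary part of the flux $\int_{|x|=r}\overline w\,\partial_r w\,d\sigma$, which the paper shows to vanish by Green's identity on the annuli (using $\alpha_j\in\R$). The only difference is that your exact Hankel Wronskian $-2i/t^2$ replaces the paper's large-$r$ asymptotics, and the subsequent inward propagation of vanishing Cauchy data across the shells is identical to the paper's.
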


\begin{proof}
By assumption, for $r>R_N$,
$$
w(x)=\gamma\,\hj{\ell}(kr)Y_{\ell m}(\widehat x)
$$
for some $\gamma\in\C$.
We recall that, as $r\to\infty$,
$$
h_\ell^{(1)}(kr)
=
(-i)^{\ell+1}\frac{e^{ikr}}{kr}\bigl(1+O(r^{-1})\bigr),
$$
so that $h_\ell^{(1)}$ represents an outgoing spherical wave.

We show that $\gamma=0$.
Fix $r>R_N$ and decompose the ball $\{|x|<r\}$ into
the annuli determined by the shells.
Applying Green's identity on each annulus and summing over all annuli, we obtain
\begin{align*}
\int_{|x|=r}\overline{w}\,\partial_r w\,d\sigma
&=
\int_{|x|<r}\bigl(|\nabla w|^2-k^2|w|^2\bigr)\,dx
\\
&\quad
+
\sum_{j=1}^N
\int_{S_j}\overline{w}\,
\bigl(
\partial_r w(R_j+0)-\partial_r w(R_j-0)
\bigr)\,d\sigma_j.
\end{align*}
Here $d\sigma$ denotes the surface measure on the outer sphere $\{|x|=r\}$,
while $d\sigma_j$ denotes the surface measure on the shell $S_j$.
There is no contribution from $x=0$ because $w$ is regular at the origin.
Using
\[
\partial_r w(R_j+0)-\partial_r w(R_j-0)=\alpha_j w\upharpoonright_{S_j},
\qquad \alpha_j\in\R,
\]
we see that the right-hand side is real. Hence
\begin{equation}\label{eq:imagflux-zero-lemma}
\Im \int_{|x|=r}\overline{w}\,\partial_r w\,d\sigma =0.
\end{equation}

On the other hand, the asymptotics
\[
\hj{\ell}(kr)
=
(-i)^{\ell+1}\frac{e^{ikr}}{kr}\bigl(1+O(r^{-1})\bigr),
\]
and
\[
\partial_r\hj{\ell}(kr)
=
(-i)^{\ell+1}\frac{e^{ikr}}{kr}
\bigl(ik-r^{-1}+O(r^{-2})\bigr)
\]
as $r\to\infty$ imply
\[
\Im \int_{|x|=r}\overline{w}\,\partial_r w\,d\sigma
=
\frac{|\gamma|^2}{k}\,\|Y_{\ell m}\|_{L^2(S^2)}^2
+
O(r^{-1}).
\]
Letting $r\to\infty$ and using \eqref{eq:imagflux-zero-lemma}, we conclude
that $\gamma=0$. Therefore
\[
w(x)=0
\qquad \text{for }|x|>R_N.
\]

Since $w$ lies in the fixed $(\ell,m)$ channel, we may write
\[
w(x)=\frac{g(r)}{r}Y_{\ell m}(\widehat x)
\]
on each interval $(R_j,R_{j+1})$, where
\[
R_0:=0,
\qquad
R_{N+1}:=\infty.
\]
Then $g$ satisfies
\[
g''(r)+\left(k^2-\frac{\ell(\ell+1)}{r^2}\right)g(r)=0
\]
on each such interval.

From $w=0$ on $(R_N,\infty)$ we obtain
\[
g(R_N+0)=0,
\qquad
g'(R_N+0)=0.
\]
Since $w$ is continuous across $r=R_N$, we also have
\[
g(R_N-0)=g(R_N+0)=0.
\]
Moreover, the jump condition
\[
\partial_r w(R_N+0)-\partial_r w(R_N-0)=\alpha_N w(R_N)
\]
implies
\[
g'(R_N+0)-g'(R_N-0)=\alpha_N g(R_N)=0,
\]
and hence
\[
g'(R_N-0)=0.
\]
Therefore the Cauchy data of $g$ vanish at $R_N$ on $(R_{N-1},R_N)$, so
uniqueness for the ODE implies $g\equiv0$ there. Repeating the same argument
across the shells, we conclude that
\[
w\equiv0
\qquad \text{on }\R^3.
\]
\end{proof}

\begin{cor}
\label{cor:partial-wave-uniqueness}
Fix $\ell\ge0$, fix $m$ with $-\ell\le m\le\ell$, and let $k>0$.
Then there exists at most one outgoing solution in the $(\ell,m)$ channel
whose incident part is
\[
\jj{\ell}(k|x|)Y_{\ell m}(\widehat x).
\]
\end{cor}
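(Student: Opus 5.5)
The plan is to take differences and apply Lemma~\ref{lem:no-homogeneous-outgoing}. Suppose $u_1$ and $u_2$ are two outgoing solutions in the $(\ell,m)$ channel. Suppose both have the same incident part $\jj{\ell}(k|x|)Y_{\ell m}(\widehat x)$. Set $w:=u_1-u_2$. We will check that $w$ satisfies hypotheses (i)--(iv) of Lemma~\ref{lem:no-homogeneous-outgoing}. That lemma then gives $w\equiv0$, hence $u_1=u_2$.

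Conditions (i)--(iii) are linear. Regularity at the origin, the equation $(-\Delta-k^2)w=0$ away from the shells, continuity across each $S_j$, and the jump relation $\partial_r w(R_j+0)-\partial_r w(R_j-0)=\alpha_j w(R_j)$ all hold for $u_1$ and $u_2$, since each solves $(H-k^2)u=0$ with the interface conditions of Theorem~\ref{thm:input}. So they pass to the difference. Clearly $w$ also stays in the fixed $(\ell,m)$ channel.

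The only step needing care is (iv): showing that in the exterior region $r>R_N$ the difference is a pure outgoing wave $\gamma\,\hj{\ell}(kr)Y_{\ell m}(\widehat x)$.

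1. Since $w$ lies in the $(\ell,m)$ channel and solves the free Helmholtz equation for $r>R_N$, its radial part solves the spherical Bessel equation there.
2. Hence for $r>R_N$ we can write $w=\bigl(\gamma\,\hj{\ell}(kr)+\beta\,h_\ell^{(2)}(kr)\bigr)Y_{\ell m}(\widehat x)$.
3. The incident parts cancel, so $w=(u_1-u_{\rm in})-(u_2-u_{\rm in})$ is a difference of two functions obeying the Sommerfeld condition. By linearity, $\lim_{r\to\infty} r(\partial_r-ik)w=0$.
4. Use the asymptotics $h_\ell^{(2)}(kr)=i^{\ell+1}e^{-ikr}(kr)^{-1}(1+O(r^{-1}))$. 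The outgoing term contributes $O(r^{-1})$ to $r(\partial_r-ik)w$. The incoming term contributes $-2i\beta\,i^{\ell+1}e^{-ikr}Y_{\ell m}(\widehat x)/k+o(1)$, which does not tend to zero unless $\beta=0$.
5. Therefore $\beta=0$, and (iv) holds.

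Lemma~\ref{lem:no-homogeneous-outgoing} then yields $w\equiv0$, which proves uniqueness.
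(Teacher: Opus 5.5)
Your proposal is correct and follows the paper's proof: take the difference $w=u_1-u_2$ of two outgoing solutions with the same incident part and apply Lemma~\ref{lem:no-homogeneous-outgoing}. Where the paper only asserts hypothesis (iv), you verify it by using the Sommerfeld condition to exclude the $h_\ell^{(2)}$ component. One small slip: the leading term of $r(\partial_r-ik)\bigl(\beta h_\ell^{(2)}(kr)\bigr)$ is $-2i\beta\, i^{\ell+1}e^{-ikr}$, with no factor $1/k$, but it is still nonzero unless $\beta=0$, so the argument is unaffected.
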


\begin{proof}
Suppose that $u$ and $\widetilde u$ are two outgoing solutions in the
$(\ell,m)$ channel with the same incident part
\[
\jj{\ell}(k|x|)Y_{\ell m}(\widehat x).
\]
Then
\[
w:=u-\widetilde u
\]
is regular at the origin, satisfies
\[
(-\Delta-k^2)w=0
\qquad \text{in }\R^3\setminus\bigcup_{j=1}^N S_j,
\]
obeys the same interface conditions, and has zero incident part.
Hence $w$ satisfies all assumptions of
Lemma~\ref{lem:no-homogeneous-outgoing}. Therefore
\[
w\equiv0,
\]
and so
\[
u=\widetilde u.
\]
Thus the outgoing solution is unique.
\end{proof}
\begin{prop}
\label{prop:Kell-invertible-positive}
For each $\ell\ge0$ and each $k>0$, the matrix
\[
K_\ell(k^2+i0)
\]
is invertible. Equivalently,
\[
Z_\ell
:=
\{k>0:\det K_\ell(k^2+i0)=0\}
=
\varnothing.
\]
\end{prop}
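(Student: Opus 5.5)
We argue by contradiction: a nontrivial kernel vector of $K_\ell(k^2+i0)$ will be used to build a nonzero homogeneous outgoing solution, which Lemma~\ref{lem:no-homogeneous-outgoing} rules out. Fix $\ell\ge0$ and $k>0$, and suppose $K_\ell^+(k)c=0$ for some $c\in\C^N$ with $c\neq0$. With $\Phi_j(\cdot,k)$ as in the proof of Lemma~\ref{lem:partial-wave-construction}, we set
\[
w(x):=-\sum_{j=1}^N\theta_j c_j\,\Phi_j(x,k),
\]
which is the same ansatz as \eqref{eq:scatt-sol-lemma}, now without the incident term.

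We first check that $w$ meets the hypotheses of Lemma~\ref{lem:no-homogeneous-outgoing}.
\begin{itemize}
\item By \eqref{eq:Phi-exterior-lemma} and \eqref{eq:Phi-interior-lemma}, $w$ lies in the $(\ell,m)$ channel, is regular at the origin, and solves $(-\Delta-k^2)w=0$ off the shells.
\item By \eqref{eq:trace-Phi-lemma}, $w(R_i\omega)=-(m_\ell^+(k)\Theta c)_iY_{\ell m}(\omega)$. Since $K_\ell^+(k)c=0$ means $m_\ell^+(k)\Theta c=-c$, this gives $w(R_i\omega)=c_iY_{\ell m}(\omega)$.
\item By \eqref{eq:jump-Phi-lemma}, the jump of $\partial_r w$ across $S_i$ is $\theta_ic_iR_i^{-2}Y_{\ell m}=\alpha_i c_iY_{\ell m}=\alpha_i w(R_i\,\cdot)$. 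This is exactly the computation in the proof of Lemma~\ref{lem:partial-wave-construction}, so the interface conditions hold.
\item For $r>R_N$, \eqref{eq:Phi-exterior-lemma} gives $w=\gamma\,\hj{\ell}(kr)Y_{\ell m}(\widehat x)$ with $\gamma=-ik\,{}^tb_\ell(k)\Theta c$. So $w$ is outgoing with zero incident part.
\end{itemize}
Lemma~\ref{lem:no-homogeneous-outgoing} then yields $w\equiv0$. In particular, $c_i Y_{\ell m}=w(R_i\,\cdot)=0$ for every $i$, so $c=0$, a contradiction.

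The main obstacle is whether the argument is circular or degenerate. The computations of $\Phi_j$ used above do not assume invertibility of $K_\ell^+(k)$, so there is no circularity. Also, the traces of $w$ recover $c$ exactly, so $w\neq0$ whenever $c\neq0$.

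A subtler point is that Lemma~\ref{lem:no-homogeneous-outgoing} relies on the flux argument and the ODE uniqueness step, both of which hold for every $k>0$. This distinguishes the positive-energy case from the threshold $k=0$, where the flux identity gives nothing. If one prefers an argument independent of that lemma, the alternative is a direct flux identity: compute $\Im\,\langle \Theta c, (m_\ell^+(k)-m_\ell^+(k)^*)\Theta c\rangle$ using Lemma~\ref{lem:Kell-boundary}. One has $\Im m_\ell^+=k\,b_\ell\,{}^tb_\ell$, which is positive semidefinite of rank one. Together with $\langle \Theta c,c\rangle\in\R$, this forces ${}^tb_\ell\Theta c=0$ (the outgoing amplitude vanishes). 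Unique continuation of the radial ODE, as in the proof of Lemma~\ref{lem:no-homogeneous-outgoing}, then gives $c=0$. I would present the first route and mention this one as a check.
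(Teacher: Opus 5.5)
Your proposal is correct and follows the paper's proof essentially verbatim. The paper also takes a kernel vector $c\neq0$ and forms $u=-\sum_j(\Theta c)_j\Phi_j(\cdot,k)$. It then uses the trace and jump identities to get $u(R_i\,\cdot)=c_iY_{\ell m}$ together with the transmission conditions, notes that the exterior part is a multiple of $h^{(1)}_\ell(kr)Y_{\ell m}$, and obtains a contradiction with Lemma~\ref{lem:no-homogeneous-outgoing}. Your secondary check, using $\Im m_\ell^+(k)=k\,b_\ell(k)\,{}^tb_\ell(k)$ and the reality of $c^*\Theta c$, is also valid; it is simply the matrix form of the same flux argument.
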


\begin{proof}
Fix $\ell\ge0$, fix $k>0$, and choose $m$ with $-\ell\le m\le\ell$.
Write
\[
Y:=Y_{\ell m}.
\]
Assume, for contradiction, that $K_\ell(k^2+i0)$ is not invertible.
Then there exists a nonzero vector
\[
c={}^t(c_1,\ldots,c_N)\in\C^N
\]
such that
\[
K_\ell(k^2+i0)c=0.
\]

Let
\[
G_k^+(x,y):=\frac{e^{ik|x-y|}}{4\pi|x-y|},
\]
and for $j=1,\ldots,N$ define
\[
\Phi_j(x,k)
:=
\int_{S^2}G_k^+(x,R_j\omega')\,Y(\omega')\,d\omega'.
\]
As in the proof of Lemma~\ref{lem:partial-wave-construction}, each
$\Phi_j(\cdot,k)$ lies in the fixed $(\ell,m)$ channel, is regular at the
origin, satisfies
\[
(-\Delta-k^2)\Phi_j(\cdot,k)=0
\qquad
\text{in }\R^3\setminus S_j,
\]
is continuous across each sphere, and obeys
\[
\Phi_j(R_i\omega,k)
=
(m_\ell^+(k))_{ij}Y(\omega),
\qquad
1\le i,j\le N,
\]
together with
\[
\partial_r\Phi_j(R_i+0,\omega,k)-\partial_r\Phi_j(R_i-0,\omega,k)
=
-\delta_{ij}R_i^{-2}Y(\omega).
\]
Moreover, for $r>R_N$ one has
\[
\Phi_j(x,k)
=
ik\,j_\ell(kR_j)\,h_\ell^{(1)}(kr)\,Y(\widehat x),
\qquad |x|=r>R_N,
\]
since $R_j<R_N<r$.

Now define
\[
u(x)
=
-\sum_{j=1}^N (\Theta c)_j\,\Phi_j(x,k).
\]
Then $u$ lies in the fixed $(\ell,m)$ channel, is regular at the origin,
satisfies
\[
(-\Delta-k^2)u=0
\qquad
\text{in }\R^3\setminus\bigcup_{j=1}^N S_j,
\]
and is outgoing with zero incident part, because for $r>R_N$ it is a linear
combination of $h_\ell^{(1)}(kr)Y(\widehat x)$ only.

We check the interface conditions. For $1\le i\le N$,
\[
u(R_i\omega)
=
-\sum_{j=1}^N (m_\ell^+(k))_{ij}(\Theta c)_j\,Y(\omega)
=
-(m_\ell^+(k)\Theta c)_i\,Y(\omega).
\]
Since
\[
K_\ell(k^2+i0)c
=
(I_N+m_\ell^+(k)\Theta)c
=
0,
\]
we have
\[
m_\ell^+(k)\Theta c=-c,
\]
and therefore
\[
u(R_i\omega)=c_iY(\omega).
\]
Similarly,
\begin{align*}
&\partial_r u(R_i+0,\omega)-\partial_r u(R_i-0,\omega)
\\
&\qquad
=
-\sum_{j=1}^N (\Theta c)_j
\bigl(
\partial_r\Phi_j(R_i+0,\omega,k)-\partial_r\Phi_j(R_i-0,\omega,k)
\bigr)
\\
&\qquad
=
(\Theta c)_iR_i^{-2}Y(\omega)
=
\alpha_i c_iY(\omega)
=
\alpha_i u(R_i\omega).
\end{align*}
Thus $u$ satisfies the transmission conditions.

Since $c\neq0$, there exists some $i$ such that $c_i\neq0$, and hence
\[
u(R_i\omega)=c_iY(\omega)\not\equiv0.
\]
Therefore $u\not\equiv0$.

We have thus constructed a nontrivial outgoing solution in the
$(\ell,m)$ channel with zero incident part. This contradicts
Lemma~\ref{lem:no-homogeneous-outgoing}. Hence
$K_\ell(k^2+i0)$ must be invertible.

The final statement follows immediately.
\end{proof}

\subsection{Determination of the channel scattering coefficient}
We now identify the scattering coefficient in each channel and derive the determinant formula.

\begin{prop}
\label{prop:stationary-channel}
Fix $\ell \ge 0$ and $m$ with $-\ell \le m \le \ell$. Then, for almost every
$k>0$, the stationary scattering solution in the $(\ell,m)$ channel whose
incident part is
\[
j_\ell(k|x|)Y_{\ell m}(\widehat x)
=
\frac12\bigl(h_\ell^{(1)}(k|x|)+h_\ell^{(2)}(k|x|)\bigr)Y_{\ell m}(\widehat x)
\]
has exterior form
\begin{equation}\label{eq:stationary-channel}
u(x,k)=
\frac12\bigl(h_\ell^{(2)}(kr)+S_\ell(k)h_\ell^{(1)}(kr)\bigr)Y_{\ell m}(\widehat x),
\qquad r>R_N.
\end{equation}
Equivalently, if an outgoing solution in the $(\ell,m)$ channel with the same
incident part has exterior form
\[
\frac12\bigl(h_\ell^{(2)}(kr)+\beta h_\ell^{(1)}(kr)\bigr)Y_{\ell m}(\widehat x),
\qquad r>R_N,
\]
then $\beta=S_\ell(k)$.
\end{prop}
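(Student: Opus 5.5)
The proposition has two parts. The second (``equivalently'') part follows from the first together with Corollary~\ref{cor:partial-wave-uniqueness}. By Proposition~\ref{prop:Kell-invertible-positive} and Lemma~\ref{lem:partial-wave-construction}, for every $k>0$ there is an outgoing solution with incident part $j_\ell(k|x|)Y_{\ell m}(\widehat x)$, and by the corollary it is unique. Two outgoing solutions with exterior coefficients $\beta$ and $\sigma_\ell(k)$ must therefore coincide. Since $h_\ell^{(1)}$ and $h_\ell^{(2)}$ are linearly independent, this forces $\beta=\sigma_\ell(k)$. So the real content is the identification
\[
\sigma_\ell(k)=S_\ell(k)\quad\text{for a.e. }k>0,
\]
where $S_\ell$ is the abstract channel coefficient from Theorem~\ref{thm:wave}.

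To prove this identification I will use the stationary representation of the scattering matrix from the resolvent formula. The plan is to use the standard formula (Kuroda/Birman--Yafaev)
\[
S(k^2)=I-2\pi i\,\Gamma_0(k^2)\,T(k^2+i0)\,\Gamma_0(k^2)^*,
\]
where $\Gamma_0(k^2)$ is the trace on the energy shell $\{|\xi|=k\}$ in the representation $\mathcal F_0$, with the normalization for $k^2dk$. The $T$-operator is read off from \eqref{eq:resolvent} as
\[
V-VR(z)V\;\leftrightarrow\;-\tau^*\Theta K_N(z)^{-1}\tau ,
\]
with $\tau=(\tau_1,\ldots,\tau_N)$. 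This stationary representation is valid for a.e.\ $k$ under the trace-class hypothesis of Theorem~\ref{thm:input}, together with the existence of the boundary values $K_\ell(k^2+i0)$ (Lemma~\ref{lem:Kell-boundary}) and their invertibility (Proposition~\ref{prop:Kell-invertible-positive}).

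The next step is the partial-wave restriction. Applying $\Gamma_0(k^2)^*$ to $Y_{\ell m}$ produces the plane-wave superposition
\[
\int e^{ik\omega\cdot x}Y_{\ell m}(\omega)\,d\omega = 4\pi i^\ell j_\ell(kr)Y_{\ell m}(\widehat x).
\]
Its trace on $S_j$ is $4\pi i^\ell j_\ell(kR_j)Y_{\ell m}$, so in the $\ell$ sector $\tau\Gamma_0^*$ becomes a constant multiple of the vector $b_\ell(k)$, and similarly for $\Gamma_0\tau^*$ with the conjugate phase. Substituting these gives
\[
S_\ell(k)=1-C k\,{}^tb_\ell(k)\Theta K_\ell^+(k)^{-1}b_\ell(k)
\]
for a universal constant $C$. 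I then fix $C=2i$ by comparing normalizations: $(2\pi)^{-3/2}$ from the Fourier transform, $(4\pi)^2$ from the plane-wave expansion, and the factor $k$ coming from the measure $k^2dk$ versus the density of $H_0$. The result matches \eqref{eq:sigma-lemma}.

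As a consistency check that fixes the constant independently of bookkeeping errors, take $N=1$ and compute the Born approximation $S_\ell=1-2ik\theta j_\ell(kR)^2+O(\theta^2)$. This must agree with the first-order term of both expressions. I expect the main obstacle to be this step: justifying the stationary formula for $S(k^2)$ for a.e.\ $k$ with singular (trace-type) perturbations, and tracking the normalization constants exactly.

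An alternative route, if the operator-theoretic formula is awkward to apply, is the time-dependent one. Use that $W_-$ maps the channel incoming data to generalized eigenfunctions $u^+_{\ell m}$, defined via the limiting absorption principle $(H-k^2-i0)^{-1}$ obtained from \eqref{eq:resolvent}. Then read off $S_\ell$ from the asymptotic ratio of the outgoing and incoming spherical waves, exactly as in the exterior form \eqref{eq:ext-form-lemma}.
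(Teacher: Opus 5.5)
Your argument is correct, but it runs in the opposite direction from the paper's.

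\textbf{The paper's route.} The proof rests on citing the standard stationary interpretation of the fiber operator (Reed--Simon, Yafaev): $S(k^2)$ maps the incoming amplitude $\tfrac12 Y_{\ell m}$ of the scattering solution to its outgoing amplitude. Inserting $S(k^2)Y_{\ell m}=S_\ell(k)Y_{\ell m}$ from Theorem~\ref{thm:wave} then gives the exterior form directly, and Corollary~\ref{cor:partial-wave-uniqueness} handles the ``equivalently'' part. The identity $S_\ell=\sigma_\ell$ is deduced only afterwards, in Theorem~\ref{thm:Sell}. Your fallback time-dependent route is essentially this.

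\textbf{Your route and what it buys.} You compute the abstract fiber operator from the resolvent formula through the Kuroda--Birman--Yafaev representation and obtain $S_\ell(k)=1-2ik\,{}^tb_\ell(k)\Theta K_\ell(k^2+i0)^{-1}b_\ell(k)$. You then get the proposition from Lemma~\ref{lem:partial-wave-construction}, Proposition~\ref{prop:Kell-invertible-positive} and uniqueness. The gain is that the link between the abstract $S$ and the boundary matrix comes entirely from the paper's own input. It no longer relies on a generalized-eigenfunction interpretation that the cited sources develop for regular potentials rather than for $\delta$-shells.

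\textbf{The price.} You need the stationary formula in factorized form, since $V=\tau^*\Theta\tau$ is only a form:
\[
S(\lambda)=I-2\pi i\,\Gamma_0(\lambda)\tau^*\bigl(\Theta-\Theta\tau R(\lambda+i0)\tau^*\Theta\bigr)\tau\Gamma_0(\lambda)^* .
\]
This is justified by a smooth or trace-class theory with $G=\tau$, $J=\Theta$. For example, $\tau R_0(z)\tau^*=m(z)$ has norm-continuous boundary values on $(0,\infty)$, so $\tau$ is locally $H_0$-smooth. The trace-class property of the resolvent difference by itself does not give such a formula. You also need the identity $\Theta-\Theta\bigl(m-m\Theta K_N^{-1}m\bigr)\Theta=\Theta K_N^{-1}$. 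It follows from \eqref{eq:resolvent}, which gives $\tau R\tau^*=m-m\Theta K_N^{-1}m$, after writing $m\Theta=K_N-I$.

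\textbf{A sign slip.} Since $\Gamma(z)=R_0(z)\tau^*$ and $\Gamma(\bar z)^*=\tau R_0(z)$, comparing \eqref{eq:resolvent} with $R-R_0=-R_0TR_0$ gives $T(z)\leftrightarrow+\tau^*\Theta K_N(z)^{-1}\tau$, not $-\tau^*\Theta K_N(z)^{-1}\tau$.

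\textbf{The constant.} With the correct sign and $k^2dk=\tfrac k2\,d\lambda$, the bookkeeping yields exactly
\[
2\pi i\cdot\tfrac k2\cdot(2\pi)^{-3}(4\pi)^2=2ik .
\]
So the Born check is unnecessary. It would also not be independent if you take $1-2ik\theta j_\ell(kR)^2$ from the phase-shift Born formula, since that presupposes the very identification you are proving.
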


\begin{proof}
We use the standard stationary interpretation of the fiber scattering matrix
in the rotationally symmetric setting, with the present incoming/outgoing
normalization of the spherical waves.
For almost every $k>0$, the fiber operator $S(k^2)$ maps the incoming
amplitude at energy $k^2$ to the outgoing amplitude of the corresponding
stationary scattering solution, see
\cite[Ch.~XI, Sects.~8A--C]{ReedSimonIII} and \cite{Yafaev1992}.

Now
\[
j_\ell(k|x|)Y_{\ell m}(\widehat x)
=
\frac12 h_\ell^{(2)}(k|x|)Y_{\ell m}(\widehat x)
+
\frac12 h_\ell^{(1)}(k|x|)Y_{\ell m}(\widehat x),
\]
so, in the standard incoming/outgoing normalization for spherical waves, the
free channel wave has incoming amplitude $\frac12 Y_{\ell m}$.
Therefore the corresponding stationary scattering solution has exterior form
\[
\frac12 h_\ell^{(2)}(kr)Y_{\ell m}(\widehat x)
+
\frac12 h_\ell^{(1)}(kr)(S(k^2)Y_{\ell m})(\widehat x),
\qquad r>R_N.
\]

By Theorem~\ref{thm:wave},
\[
S(k^2)\upharpoonright_{{\cal H}_\ell}=S_\ell(k)I_{{\cal H}_\ell}
\]
for almost every $k>0$, hence
\[
S(k^2)Y_{\ell m}=S_\ell(k)Y_{\ell m}.
\]
Substituting this gives the stated formula.

The final assertion follows from this formula together with
Corollary~\ref{cor:partial-wave-uniqueness}.
\end{proof}

We now state the main result of this section, which gives a determinant
formula for $S_\ell(k)$.

\begin{thm}
\label{thm:Sell}
For each $\ell\ge0$ and for almost every $k>0$,
\begin{equation}\label{eq:Sell-det}
S_\ell(k)
=
\frac{\det K_\ell(k^2-i0)}{\det K_\ell(k^2+i0)}.
\end{equation}
Moreover, by Proposition~\ref{prop:Kell-invertible-positive}, the
right-hand side of \eqref{eq:Sell-det} is well defined for every $k>0$.

In particular,
\[
|S_\ell(k)|=1
\qquad \text{for almost every }k>0,
\]
so one may choose a real-valued phase shift $\delta_\ell(k)$, defined for
almost every $k>0$, such that
\[
S_\ell(k)=e^{2i\delta_\ell(k)}.
\]
If $J\subset(0,\infty)$ is an interval, then, for any continuous branch of
$\arg \det K_\ell(k^2+i0)$ on $J$, one may choose $\delta_\ell$ on $J$ so that
\begin{equation}\label{eq:delta-arg}
\delta_\ell(k)
=
-\arg \det K_\ell(k^2+i0)
\qquad \text{for almost every }k\in J.
\end{equation}
\end{thm}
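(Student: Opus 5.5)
By Proposition~\ref{prop:stationary-channel}, Lemma~\ref{lem:partial-wave-construction} and Proposition~\ref{prop:Kell-invertible-positive}, we have $S_\ell(k)=\sigma_\ell(k)$ for almost every $k>0$, where
\[
\sigma_\ell(k)=1-2ik\,{}^tb_\ell\Theta (K_\ell^+)^{-1}b_\ell .
\]
So the theorem reduces to the purely algebraic identity $\sigma_\ell(k)=\det K_\ell^-(k)/\det K_\ell^+(k)$, which we will prove for every $k>0$. Once that is done, $|S_\ell|=1$ follows from Lemma~\ref{lem:Kell-boundary}, since $\det K_\ell^-=\overline{\det K_\ell^+}$. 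Writing $\det K_\ell^+=|\det K_\ell^+|e^{i\arg}$ then gives $S_\ell=e^{-2i\arg\det K_\ell^+}$, and \eqref{eq:delta-arg} follows. A continuous branch of the argument exists on any interval $J$ because $\det K_\ell^+$ is continuous and nonvanishing there.

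The key step is a rank-one relation between $m_\ell^-$ and $m_\ell^+$. From the formulas in Lemma~\ref{lem:Kell-boundary},
\[
(m_\ell^+-m_\ell^-)_{ij}=ik\,j_\ell(kr_<)\bigl(h^{(1)}_\ell+h^{(2)}_\ell\bigr)(kr_>)=2ik\,j_\ell(kR_i)j_\ell(kR_j),
\]
using $h^{(1)}_\ell+h^{(2)}_\ell=2j_\ell$. Hence
\[
m_\ell^-=m_\ell^+-2ik\,b_\ell{}^tb_\ell ,
\]
so $K_\ell^-=K_\ell^+-2ik\,b_\ell{}^tb_\ell\Theta$. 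Since $K_\ell^+$ is invertible,
\[
K_\ell^-=K_\ell^+\bigl(I_N-2ik\,(K_\ell^+)^{-1}b_\ell\,{}^tb_\ell\Theta\bigr).
\]
The matrix determinant lemma $\det(I+uv^{t})=1+v^{t}u$ then gives
\[
\det K_\ell^-=\det K_\ell^+\bigl(1-2ik\,{}^tb_\ell\Theta(K_\ell^+)^{-1}b_\ell\bigr)=\det K_\ell^+\,\sigma_\ell(k),
\]
which is the desired identity.

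The only delicate point is the sign bookkeeping in $m_\ell^-$. In Lemma~\ref{lem:Kell-boundary} the factor $-ik$ combined with $j_\ell(-t)h^{(1)}_\ell(-t)=j_\ell(t)h^{(2)}_\ell(t)$, the two factors of $(-1)^\ell$ cancelling. This must be checked carefully, since it is what produces exactly $2j_\ell$ rather than $2i y_\ell$. The "almost every" qualifier is inherited solely from Theorem~\ref{thm:wave} and Proposition~\ref{prop:stationary-channel}.
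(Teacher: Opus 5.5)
Your proposal is correct and follows essentially the same route as the paper's proof. You identify $S_\ell(k)=\sigma_\ell(k)$ via Lemma~\ref{lem:partial-wave-construction}, Proposition~\ref{prop:stationary-channel} and Proposition~\ref{prop:Kell-invertible-positive}, then use the rank-one relation $K_\ell^-(k)=K_\ell^+(k)-2ik\,b_\ell(k)\,{}^tb_\ell(k)\Theta$ with the matrix determinant lemma, and deduce $|S_\ell|=1$ and the phase formula from $\det K_\ell^-=\overline{\det K_\ell^+}$ (Lemma~\ref{lem:Kell-boundary}) and the continuity and nonvanishing of $\det K_\ell^+$; your sign bookkeeping for $m_\ell^-$ (the two factors $(-1)^\ell$ cancelling, then $h_\ell^{(1)}+h_\ell^{(2)}=2j_\ell$) is also right.
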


\begin{proof}
It suffices to prove \eqref{eq:Sell-det}.

Fix $\ell\ge0$, and let $k>0$ be such that $S_\ell(k)$ is defined.
By Theorem~\ref{thm:wave}, this holds for almost every $k>0$.
By Proposition~\ref{prop:Kell-invertible-positive},
\[
\det K_\ell(k^2+i0)\neq0.
\]

Fix $m$ with $-\ell\le m\le\ell$.
Hence Lemma~\ref{lem:partial-wave-construction} applies, and there exists
an outgoing solution in the $(\ell,m)$ channel with incident part
\[
\jj{\ell}(k|x|)Y_{\ell m}(\widehat x),
\]
and its exterior form is
\[
\frac12
\Bigl(
h_\ell^{(2)}(kr)+\sigma_\ell(k)\hj{\ell}(kr)
\Bigr)
Y_{\ell m}(\widehat x),
\qquad r>R_N,
\]
where
\[
\sigma_\ell(k)
=
1-2ik\,{}^t b_\ell(k)\Theta K_\ell(k^2+i0)^{-1}b_\ell(k).
\]
By Proposition~\ref{prop:stationary-channel}, the outgoing coefficient in
this normalization is exactly $S_\ell(k)$. Comparing
\eqref{eq:ext-form-lemma} with \eqref{eq:stationary-channel}, we obtain
\[
S_\ell(k)=\sigma_\ell(k).
\]
Therefore
\begin{equation}\label{eq:S-explicit-thm}
S_\ell(k)
=
1-2ik\,{}^t b_\ell(k)\Theta K_\ell(k^2+i0)^{-1}b_\ell(k).
\end{equation}

Moreover, using
\[
\hj{\ell}(t)+h_\ell^{(2)}(t)=2\jj{\ell}(t),
\]
we obtain
\[
(m_\ell(k^2-i0)-m_\ell(k^2+i0))_{ij}
=
-2ik\,\jj{\ell}(kR_i)\jj{\ell}(kR_j),
\qquad 1\le i,j\le N.
\]
Equivalently,
\[
m_\ell(k^2-i0)-m_\ell(k^2+i0)
=
-2ik\,b_\ell(k)\,{}^t b_\ell(k).
\]
Hence
\[
K_\ell(k^2-i0)
=
K_\ell(k^2+i0)-2ik\,b_\ell(k)\,{}^t b_\ell(k)\,\Theta.
\]

Applying the matrix determinant lemma
\[
\det(A+u\,{}^t v)=\det(A)\bigl(1+{}^t v A^{-1}u\bigr)
\]
with
\[
A=K_\ell(k^2+i0),
\qquad
u=-2ik\,b_\ell(k),
\qquad
{}^t v={}^t b_\ell(k)\Theta,
\]
we obtain
\begin{align*}
\det K_\ell(k^2-i0)
&=
\det K_\ell(k^2+i0)
\Bigl(
1-2ik\,{}^t b_\ell(k)\Theta K_\ell(k^2+i0)^{-1}b_\ell(k)
\Bigr)
\\
&=
\det K_\ell(k^2+i0)\,S_\ell(k)
\end{align*}
by \eqref{eq:S-explicit-thm}. Therefore
\[
S_\ell(k)
=
\frac{\det K_\ell(k^2-i0)}{\det K_\ell(k^2+i0)}.
\]
This proves \eqref{eq:Sell-det} for every $k>0$ such that $S_\ell(k)$ is
defined, hence for almost every $k>0$.

By Proposition~\ref{prop:Kell-invertible-positive}, the denominator in
\eqref{eq:Sell-det} is nonzero for every $k>0$, so the right-hand side is
well defined for every $k>0$.
By Lemma~\ref{lem:Kell-boundary},
\[
\det K_\ell(k^2-i0)
=
\overline{\det K_\ell(k^2+i0)}.
\]
Combining this with \eqref{eq:Sell-det}, we obtain
\[
|S_\ell(k)|=1
\qquad \text{for almost every }k>0.
\]
Hence one may choose a real-valued phase shift $\delta_\ell(k)$, defined for
almost every $k>0$, such that
\[
S_\ell(k)=e^{2i\delta_\ell(k)}.
\]

Now let $J\subset(0,\infty)$ be an interval, and set
\[
D_\ell(k):=\det K_\ell(k^2+i0).
\]
Since the entries of $K_\ell(k^2+i0)$ depend continuously on $k>0$,
the function $D_\ell$ is continuous on $(0,\infty)$.
By Proposition~\ref{prop:Kell-invertible-positive},
\[
D_\ell(k)\neq0
\qquad (k>0).
\]
Hence, on $J$ one may choose a continuous branch of $\arg D_\ell(k)$, and for
almost every $k\in J$ one has
\[
S_\ell(k)
=
\frac{\overline{D_\ell(k)}}{D_\ell(k)}
=
e^{-2i\arg D_\ell(k)}.
\]
Therefore one may choose $\delta_\ell$ on $J$ so that
\[
\delta_\ell(k)
=
-\arg D_\ell(k)
=
-\arg\det K_\ell(k^2+i0)
\qquad \text{for almost every }k\in J,
\]
which proves \eqref{eq:delta-arg}.
\end{proof}
\begin{rem}
By Proposition~\ref{prop:Kell-invertible-positive}, the determinant ratio in
\eqref{eq:Sell-det} is well defined for every $k>0$.
The phrase ``for almost every $k>0$'' refers only to the measurable
representative of the channel scattering coefficient arising from the
abstract scattering operator.
\end{rem}


\section{The double $\delta$--shell case}\label{sec:double}

We restrict attention to the $s$--wave channel ($\ell=0$), which captures
the leading contribution in the low--energy regime. In this channel the
formulas reduce to elementary functions and the threshold behavior can be
analyzed in a completely explicit form.

For $\ell\ge1$, the same determinant formula remains valid. In the present
paper, however, we restrict the threshold analysis to the $s$--wave channel,
which already captures the phenomenon of interest in the double--shell model.

We now specialize to the case $N=2$. Thus
$$
0<R_1<R_2,
\qquad
H=-\Delta+\alpha_1\delta(|x|-R_1)+\alpha_2\delta(|x|-R_2).
$$
We write
$$
\theta_j=\alpha_jR_j^2,
\qquad
j=1,2.
$$
In this case the general determinant formula from the previous section becomes
completely explicit. We restrict attention to the $s$--wave channel, where all
relevant quantities can be written in elementary functions.

\subsection{The $s$--wave channel}

We consider the case $\ell=0$. Then
$$
\jj{0}(x)=\frac{\sin x}{x},
\qquad
\hj{0}(x)=-\,\frac{i e^{ix}}{x}.
$$
For convenience, we also set
$$
s_j=\sin(kR_j),
\qquad
c_j=\cos(kR_j),
\qquad
j=1,2.
$$

The following lemma gives the boundary matrix in this channel.

\begin{lem}\label{lem:m0}
For $k>0$, the $s$--wave boundary matrix $m_0(k^2+i0)$ is given by
$$
m_0(k^2+i0)
=
\left(
\begin{array}{cc}
\dfrac{s_1e^{ikR_1}}{kR_1^2} &
\dfrac{s_1e^{ikR_2}}{kR_1R_2}
\\[12pt]
\dfrac{s_1e^{ikR_2}}{kR_1R_2} &
\dfrac{s_2e^{ikR_2}}{kR_2^2}
\end{array}
\right).
$$
Accordingly,
$$
K_0(k^2+i0)=I_2+m_0(k^2+i0)\Theta,
\qquad
\Theta=\diag(\theta_1,\theta_2).
$$
\end{lem}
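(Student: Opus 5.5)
The plan is to specialize Lemma~\ref{lem:mell} to $\ell=0$ and $N=2$, then pass to the boundary value $z=k^2+i0$ as in Lemma~\ref{lem:Kell-boundary}, where $\sqrt{k^2+i0}=k$. The entries are then
\[
(m_0^+(k))_{ij}=ik\,j_0\bigl(k\min\{R_i,R_j\}\bigr)\,h^{(1)}_0\bigl(k\max\{R_i,R_j\}\bigr),
\]
and I will substitute the elementary forms $j_0(x)=\sin x/x$ and $h^{(1)}_0(x)=-ie^{ix}/x$.

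The key simplification is the product
\[
ik\cdot\frac{\sin(kr_<)}{kr_<}\cdot\Bigl(-\frac{ie^{ikr_>}}{kr_>}\Bigr)=\frac{\sin(kr_<)\,e^{ikr_>}}{k\,r_<r_>},
\]
which uses $i\cdot(-i)=1$ and $k/k^2=1/k$. I then read off the three cases.
\begin{itemize}
\item For $i=j=1$, taking $r_<=r_>=R_1$ gives $s_1e^{ikR_1}/(kR_1^2)$.
\item For $i\ne j$, we have $\min=R_1$ and $\max=R_2$, which gives $s_1e^{ikR_2}/(kR_1R_2)$. This is symmetric, consistent with Lemma~\ref{lem:mell}.
\item For $i=j=2$, we get $s_2e^{ikR_2}/(kR_2^2)$.
\end{itemize}

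The formula for $K_0(k^2+i0)$ then follows directly from \eqref{eq:Kell} with $\Theta=\diag(\theta_1,\theta_2)$.

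There is no real obstacle here. The only point needing care is confirming the explicit form of $h^{(1)}_0$ and the branch convention. One checks that $h_0^{(1)}=j_0+iy_0$ with $y_0(x)=-\cos x/x$, so $h_0^{(1)}(x)=(\sin x-i\cos x)/x=-ie^{ix}/x$. The boundary value from the upper half-plane keeps $\sqrt z\to k$, so no sign flip of the kind seen in the $k^2-i0$ case occurs. Continuity of these elementary expressions in $k$ justifies taking the limit $z\to k^2+i0$ entrywise.
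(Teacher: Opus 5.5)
Your proposal is correct and follows the paper's own route: specialize Lemma~\ref{lem:mell} to $\ell=0$, $N=2$, and substitute $j_0(t)=\sin t/t$ and $h_0^{(1)}(t)=-ie^{it}/t$. Your checks of the prefactor $i\cdot(-i)=1$, the explicit form of $h_0^{(1)}$, and the branch $\sqrt{k^2+i0}=k$ fill in exactly the details the paper leaves to the reader.
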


\begin{proof}
This follows immediately from Lemma~\ref{lem:mell} with $\ell=0$, using
$$
j_0(t)=\frac{\sin t}{t},
\qquad
h_0^{(1)}(t)=-i\frac{e^{it}}{t}.
$$
The formula for $K_0(k^2+i0)$ is the definition of $K_\ell(z)$ specialized to
$\ell=0$.
\end{proof}

We now derive an explicit formula for $S_0(k)$ in the $s$--wave channel,
equivalently for the determinant $\det K_0(k^2+i0)$.

\begin{thm}
\label{thm:S0}
Define real functions $A_0(k)$ and $B_0(k)$ by
\begin{eqnarray}
A_0(k)
&=&
R_1^2R_2^2k^2
+
R_1^2k\,c_2s_2\,\theta_2
+
R_2^2k\,c_1s_1\,\theta_1
\nonumber\\
&&\hspace{1cm}
+
\theta_1\theta_2
\bigl(c_1c_2s_1s_2-c_2^2s_1^2\bigr),
\label{eq:A0}
\end{eqnarray}
and
\begin{eqnarray}
B_0(k)
&=&
R_1^2k\,s_2^2\,\theta_2
+
R_2^2k\,s_1^2\,\theta_1
\nonumber\\
&&\hspace{1cm}
+
\theta_1\theta_2
s_1s_2\bigl(c_1s_2-c_2s_1\bigr).
\label{eq:B0}
\end{eqnarray}
Then
\begin{equation}\label{eq:scaled-det}
k^2R_1^2R_2^2\,\det K_0(k^2+i0)=A_0(k)+iB_0(k).
\end{equation}
Consequently, for almost every $k>0$,
\begin{equation}\label{eq:S0}
S_0(k)=\frac{A_0(k)-iB_0(k)}{A_0(k)+iB_0(k)}.
\end{equation}
Moreover, for any interval $J\subset(0,\infty)$ and any continuous branch of
$\arg\bigl(A_0(k)+iB_0(k)\bigr)$ on $J$, one may choose the phase shift
$\delta_0$ on $J$ so that
\begin{equation}\label{eq:delta0}
\delta_0(k)
=
-\arg\bigl(A_0(k)+iB_0(k)\bigr),
\qquad \mbox{for almost every~} k\in J.
\end{equation}
\end{thm}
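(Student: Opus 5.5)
The plan is to compute $\det K_0(k^2+i0)$ directly from Lemma~\ref{lem:m0} and then invoke Theorem~\ref{thm:Sell}. Write $m=m_0(k^2+i0)$ with entries $m_{11},m_{12}=m_{21},m_{22}$. For a $2\times2$ matrix,
\[
\det(I_2+m\Theta)=1+m_{11}\theta_1+m_{22}\theta_2+\theta_1\theta_2\bigl(m_{11}m_{22}-m_{12}^2\bigr).
\]
Multiplying by $k^2R_1^2R_2^2$ clears all denominators. The terms become:
\begin{itemize}
\item[(a)] $k^2R_1^2R_2^2$ from the identity;
\item[(b)] $kR_2^2\theta_1 s_1e^{ikR_1}$;
\item[(c)] $kR_1^2\theta_2 s_2e^{ikR_2}$;
\item[(d)] $\theta_1\theta_2\bigl(s_1s_2e^{ik(R_1+R_2)}-s_1^2e^{2ikR_2}\bigr)$.
\end{itemize}

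Next I will take real and imaginary parts using $e^{ikR_j}=c_j+is_j$. Terms (b) and (c) give $kR_2^2\theta_1(c_1s_1+is_1^2)$ and $kR_1^2\theta_2(c_2s_2+is_2^2)$, which match the linear terms of $A_0$ and $B_0$.

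For the quadratic term, I will factor $e^{ikR_2}$:
\[
s_1e^{ikR_2}\bigl(s_2e^{ikR_1}-s_1e^{ikR_2}\bigr).
\]
Note that $s_2e^{ikR_1}-s_1e^{ikR_2}=s_2c_1-s_1c_2$ is real, since the imaginary parts $s_1s_2$ cancel. The term is therefore
\[
s_1(c_2+is_2)(c_1s_2-c_2s_1).
\]
Its real part is $c_1c_2s_1s_2-c_2^2s_1^2$ and its imaginary part is $s_1s_2(c_1s_2-c_2s_1)$, exactly as in \eqref{eq:A0} and \eqref{eq:B0}. This proves \eqref{eq:scaled-det}, with $A_0,B_0$ manifestly real.

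For \eqref{eq:S0}, Lemma~\ref{lem:Kell-boundary} gives $\det K_0(k^2-i0)=\overline{\det K_0(k^2+i0)}$. Multiplying numerator and denominator of \eqref{eq:Sell-det} by the positive real number $k^2R_1^2R_2^2$ then yields $(A_0-iB_0)/(A_0+iB_0)$. By Proposition~\ref{prop:Kell-invertible-positive}, $A_0+iB_0\neq0$ for all $k>0$.

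For \eqref{eq:delta0}, I will apply \eqref{eq:delta-arg}. Since $\arg\det K_0(k^2+i0)$ and $\arg(A_0+iB_0)$ differ only by the argument of a positive factor, a continuous branch of one is a continuous branch of the other.

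The only step needing care is the quadratic term: the simplification hinges on spotting that $s_2e^{ikR_1}-s_1e^{ikR_2}$ is real. Everything else is bookkeeping.
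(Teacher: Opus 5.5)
Your proposal is correct and follows the paper's proof step for step. You expand the $2\times2$ determinant from Lemma~\ref{lem:m0}, clear denominators with $k^2R_1^2R_2^2$, and split into real and imaginary parts. You then invoke Theorem~\ref{thm:Sell}, Lemma~\ref{lem:Kell-boundary} and Proposition~\ref{prop:Kell-invertible-positive} for the formulas for $S_0$ and $\delta_0$. Your factorization of the $\theta_1\theta_2$ term through the real quantity $s_2e^{ikR_1}-s_1e^{ikR_2}=c_1s_2-c_2s_1$ is a clean way to carry out the step the paper leaves as ``a direct computation.''
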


\begin{proof}
By Lemma~\ref{lem:m0},
$$
K_0(k^2+i0)
=
\left(
\begin{array}{cc}
1+\theta_1\dfrac{s_1e^{ikR_1}}{kR_1^2} &
\theta_2\dfrac{s_1e^{ikR_2}}{kR_1R_2}
\\[12pt]
\theta_1\dfrac{s_1e^{ikR_2}}{kR_1R_2} &
1+\theta_2\dfrac{s_2e^{ikR_2}}{kR_2^2}
\end{array}
\right).
$$
Hence
\begin{eqnarray}
\det K_0(k^2+i0)
&=&
\Bigl(1+\theta_1\dfrac{s_1e^{ikR_1}}{kR_1^2}\Bigr)
\Bigl(1+\theta_2\dfrac{s_2e^{ikR_2}}{kR_2^2}\Bigr)
\nonumber\\
&&\hspace{1cm}
-
\theta_1\theta_2
\dfrac{s_1^2e^{2ikR_2}}{k^2R_1^2R_2^2}.
\label{eq:detK0-step1}
\end{eqnarray}
Multiplying by $k^2R_1^2R_2^2$, we obtain
\begin{eqnarray}
k^2R_1^2R_2^2\,\det K_0(k^2+i0)
&=&
k^2R_1^2R_2^2
+
kR_2^2\theta_1 s_1e^{ikR_1}
+
kR_1^2\theta_2 s_2e^{ikR_2}
\nonumber\\
&&\hspace{1cm}
+
\theta_1\theta_2 s_1s_2e^{ik(R_1+R_2)}
-
\theta_1\theta_2 s_1^2e^{2ikR_2}.
\label{eq:detK0-step2}
\end{eqnarray}

Using
$$
e^{ikR_j}=c_j+is_j,
\qquad
e^{ik(R_1+R_2)}=(c_1+is_1)(c_2+is_2),
\qquad
e^{2ikR_2}=(c_2+is_2)^2,
$$
we separate the real and imaginary parts. 
A direct computation yields
\begin{equation}\label{eq:detK0-step3}
k^2R_1^2R_2^2\,\det K_0(k^2+i0)=A_0(k)+iB_0(k),
\end{equation}
where $A_0(k)$ and $B_0(k)$ are given by \eqref{eq:A0} and \eqref{eq:B0}.
This proves \eqref{eq:scaled-det}.

The formula for $S_0(k)$ now follows from Theorem~\ref{thm:Sell}. Indeed,
by \eqref{eq:scaled-det},
$$
\det K_0(k^2+i0)=\frac{A_0(k)+iB_0(k)}{k^2R_1^2R_2^2},
$$
and, by Lemma~\ref{lem:Kell-boundary},
\[
\det K_0(k^2-i0)
=
\overline{\det K_0(k^2+i0)}
=
\frac{A_0(k)-iB_0(k)}{k^2R_1^2R_2^2}.
\]
Substituting these expressions into \eqref{eq:Sell-det}, we obtain
\eqref{eq:S0} for almost every $k>0$.

Now let $J\subset(0,\infty)$ be an interval.
By Proposition~\ref{prop:Kell-invertible-positive},
\[
\det K_0(k^2+i0)\neq0
\qquad (k>0).
\]
Hence, by \eqref{eq:scaled-det},
\[
A_0(k)+iB_0(k)\neq0
\qquad (k>0).
\]
Therefore, for any continuous branch of
$\arg\bigl(A_0(k)+iB_0(k)\bigr)$ on $J$, one may choose $\delta_0$ on $J$ so
that
\[
\delta_0(k)
=
-\arg\det K_0(k^2+i0)
=
-\arg\bigl(A_0(k)+iB_0(k)\bigr),
\qquad \text{for almost every }k\in J,
\]
which is exactly \eqref{eq:delta0}.
\end{proof}

\begin{rem}
Formula \eqref{eq:S0} gives a completely explicit expression for the
$s$--wave scattering matrix in the double $\delta$--shell case.
\end{rem}
\section{Low--energy behavior in the double $\delta$--shell case}
\label{sec:low}

We continue to assume that $N=2$ and restrict attention to the $s$--wave
channel $\ell=0$. Using the explicit formula in
Theorem~\ref{thm:S0}, we analyze the behavior of the scattering matrix near
the threshold $k=0$. We first derive the regular threshold expansion and, in
particular, an explicit formula for the scattering length.

\begin{thm}
\label{thm:regular-threshold}
Set
\begin{equation}\label{eq:C0}
C_0
=
R_1^2R_2^2
+
R_1^2R_2\theta_2
+
R_1R_2^2\theta_1
+
\theta_1\theta_2R_1(R_2-R_1).
\end{equation}
Assume that
$$
C_0\neq0.
$$
Then one may choose the phase shift $\delta_0(k)$ for all sufficiently small
$k>0$ so that
$$
\delta_0(k)\to0
\qquad (k\downarrow0).
$$
For this choice, as $k\downarrow0$,
\begin{equation}\label{eq:delta0-low}
\delta_0(k)=-a_{\mathrm s}k+O(k^3),
\end{equation}
where the scattering length is given by
\begin{equation}\label{eq:scatt-length}
a_{\mathrm s}
=
\frac{
R_1^2R_2^2(\theta_1+\theta_2)
+
\theta_1\theta_2R_1R_2(R_2-R_1)
}{
R_1^2R_2^2
+
R_1^2R_2\theta_2
+
R_1R_2^2\theta_1
+
\theta_1\theta_2R_1(R_2-R_1)
}.
\end{equation}
Equivalently,
\begin{equation}\label{eq:S0-low}
S_0(k)=1-2ia_{\mathrm s}k+O(k^2),
\qquad
k\downarrow0.
\end{equation}
\end{thm}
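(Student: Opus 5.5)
Expanding $A_0$ and $B_0$ from Theorem~\ref{thm:S0} in Taylor series at $k=0$ and reading off the phase of $A_0+iB_0$ should suffice. Using $s_j=kR_j-\tfrac16k^3R_j^3+O(k^5)$ and $c_j=1-\tfrac12k^2R_j^2+O(k^4)$, every term of $A_0$ is even in $k$ and every term of $B_0$ is odd in $k$. This parity follows from $k\mapsto -k$, under which $s_j\mapsto -s_j$ and $c_j\mapsto c_j$. For the leading orders I would compute
\[
R_1^2k\,c_2s_2\theta_2=k^2R_1^2R_2\theta_2+O(k^4),\qquad
R_2^2k\,c_1s_1\theta_1=k^2R_1R_2^2\theta_1+O(k^4),
\]
and
\[
\theta_1\theta_2(c_1c_2s_1s_2-c_2^2s_1^2)=\theta_1\theta_2k^2R_1(R_2-R_1)+O(k^4).
\]
Hence $A_0(k)=C_0k^2+O(k^4)$. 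For $B_0$ I would compute
\[
R_1^2k\,s_2^2\theta_2+R_2^2k\,s_1^2\theta_1=k^3R_1^2R_2^2(\theta_1+\theta_2)+O(k^5).
\]
In the cubic term the leading part of $c_1s_2-c_2s_1=\sin(k(R_2-R_1))$ is $k(R_2-R_1)$, so $\theta_1\theta_2s_1s_2(c_1s_2-c_2s_1)=\theta_1\theta_2k^3R_1R_2(R_2-R_1)+O(k^5)$. Thus $B_0(k)=N_0k^3+O(k^5)$, where $N_0$ is the numerator in \eqref{eq:scatt-length}. Using the exact identity $c_1s_2-c_2s_1=\sin(k(R_2-R_1))$ makes the bookkeeping cleanest.

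Since $C_0\neq0$, $A_0(k)\neq0$ for small $k>0$, and $B_0(k)/A_0(k)=a_{\mathrm s}k+O(k^3)$ by the parity observation above. Dividing $A_0+iB_0$ by $k^2$ gives a continuous function $C_0+O(k^2)$ on a small interval $(0,k_0)$, so $\arg(A_0+iB_0)$ has a continuous branch there. I would choose the branch $\arg(A_0+iB_0)=\arctan(B_0/A_0)+\pi\,\mathbf 1_{\{C_0<0\}}$. By \eqref{eq:delta0}, $\delta_0=-\arg(A_0+iB_0)$ a.e. Since $\delta_0$ is determined only modulo $\pi$ (we only need $S_0=e^{2i\delta_0}$), I can drop the constant $\pi$. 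This gives $\delta_0(k)=-\arctan\bigl(a_{\mathrm s}k+O(k^3)\bigr)=-a_{\mathrm s}k+O(k^3)$, which tends to $0$. For \eqref{eq:S0-low} I would write
\[
S_0=\frac{1-iB_0/A_0}{1+iB_0/A_0},
\]
which equals $1-2ia_{\mathrm s}k+O(k^2)$.

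I expect the main obstacle to be the careful expansion of the $\theta_1\theta_2$ terms and confirming that the error orders are $O(k^4)$ in $A_0$ and $O(k^5)$ in $B_0$, rather than merely $o(k^2)$ and $o(k^3)$. I would handle this by the parity argument: $A_0$ and $B_0$ are real-analytic in $k$, with $A_0$ even and $B_0$ odd. It then suffices to identify the $k^0$ coefficient of $A_0$ (which vanishes), its $k^2$ coefficient, and the $k$ and $k^3$ coefficients of $B_0$ (the $k$ coefficient also vanishes).
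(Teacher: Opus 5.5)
Your proposal is correct and follows the paper's proof essentially step for step: you expand $A_0$ and $B_0$ to obtain $A_0(k)=C_0k^2+O(k^4)$ and $B_0(k)=\Gamma_0k^3+O(k^5)$ (your $N_0$ is the paper's $\Gamma_0$), use $C_0\neq0$ to take $\delta_0=-\arctan(B_0/A_0)$ modulo $\pi$, and expand. The parity argument and the identity $c_1s_2-c_2s_1=\sin(k(R_2-R_1))$ are harmless conveniences that the paper does without, because the crude expansions $\sin(kR_j)=kR_j+O(k^3)$ and $\cos(kR_j)=1+O(k^2)$ already give those error orders; the only slip is that $(A_0+iB_0)/k^2=C_0+O(k)$ rather than $C_0+O(k^2)$, which changes nothing in the argument.
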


\begin{proof}
We first show that
\[
A_0(k)+iB_0(k)\neq0
\]
for all sufficiently small $k>0$.
Once this is established, Theorem~\ref{thm:S0} allows us to choose the phase
shift so that
\[
\delta_0(k)=-\arg\bigl(A_0(k)+iB_0(k)\bigr)
\]
for all sufficiently small $k>0$.
We begin by expanding $A_0(k)$ and $B_0(k)$ as $k\downarrow0$.

Using
$$
\sin(kR_j)=kR_j+O(k^3),
\qquad
\cos(kR_j)=1+O(k^2),
\qquad
j=1,2,
$$
formula \eqref{eq:A0} gives
\begin{eqnarray}
A_0(k)
&=&
R_1^2R_2^2k^2
+
R_1^2k\,\theta_2\bigl(kR_2+O(k^3)\bigr)
+
R_2^2k\,\theta_1\bigl(kR_1+O(k^3)\bigr)
\nonumber\\
&&\hspace{1cm}
+
\theta_1\theta_2\bigl(R_1R_2-R_1^2\bigr)k^2
+
O(k^4).
\label{eq:A0-low-expand1}
\end{eqnarray}
Hence
\begin{equation}\label{eq:A0-low}
A_0(k)=C_0\,k^2+O(k^4).
\end{equation}

Similarly, \eqref{eq:B0} yields
\begin{eqnarray}
B_0(k)
&=&
R_1^2k\,\theta_2\bigl(k^2R_2^2+O(k^4)\bigr)
+
R_2^2k\,\theta_1\bigl(k^2R_1^2+O(k^4)\bigr)
\nonumber\\
&&\hspace{1cm}
+
\theta_1\theta_2
(kR_1)(kR_2)\bigl(kR_2-kR_1\bigr)
+
O(k^5),
\label{eq:B0-low-expand1}
\end{eqnarray}
and therefore
\begin{equation}\label{eq:B0-low}
B_0(k)=\Gamma_0\,k^3+O(k^5),
\end{equation}
where
\begin{equation}\label{eq:Gamma0}
\Gamma_0
=
R_1^2R_2^2(\theta_1+\theta_2)
+
\theta_1\theta_2R_1R_2(R_2-R_1).
\end{equation}

Since $C_0\neq0$, relation \eqref{eq:A0-low} shows that
$$
A_0(k)\neq0
$$
for all sufficiently small $k>0$. 
Hence also
$$
A_0(k)+iB_0(k)\neq0
$$
for all sufficiently small $k>0$. Moreover, by \eqref{eq:S0},
$$
S_0(k)
=
\frac{1-i\,B_0(k)/A_0(k)}{1+i\,B_0(k)/A_0(k)}.
$$
Since
$$
\frac{B_0(k)}{A_0(k)}=O(k)
\qquad (k\downarrow0),
$$
it follows that
$$
S_0(k)=1+O(k)
\qquad (k\downarrow0).
$$
Therefore, one may choose the phase shift $\delta_0(k)$ for all sufficiently
small $k>0$ so that
$$
\delta_0(k)\to0
\qquad (k\downarrow0).
$$
Since phase shifts are determined only modulo $\pi$, for this choice we have
\begin{equation}\label{eq:delta-arctan}
\delta_0(k)
=
-\arctan\!\frac{B_0(k)}{A_0(k)}
\end{equation}
for all sufficiently small $k>0$.

Moreover,
$$
\frac{B_0(k)}{A_0(k)}=O(k)
\qquad (k\downarrow0).
$$
Using
$$
\arctan t=t+O(t^3)
\qquad (t\to0),
$$
we obtain from \eqref{eq:delta-arctan} that
\begin{equation}\label{eq:arg-expand}
\delta_0(k)
=
-\frac{B_0(k)}{A_0(k)}+O(k^3).
\end{equation}
Combining \eqref{eq:A0-low} and \eqref{eq:B0-low}, we obtain
\begin{equation}\label{eq:delta0-low-proof}
\delta_0(k)
=
-\frac{\Gamma_0}{C_0}\,k+O(k^3).
\end{equation}
Thus \eqref{eq:delta0-low} holds with
$$
a_{\mathrm s}=\frac{\Gamma_0}{C_0},
$$
which is exactly \eqref{eq:scatt-length}.

Finally, since
$$
S_0(k)=e^{2i\delta_0(k)},
$$
equation \eqref{eq:delta0-low} implies
$$
S_0(k)=1+2i\delta_0(k)+O(k^2)
=
1-2ia_{\mathrm s}k+O(k^2),
$$
which proves \eqref{eq:S0-low}.
\end{proof}

\begin{rem}
The condition $C_0\neq0$ characterizes the regular threshold regime.
The complementary case $C_0=0$ corresponds to a threshold--critical
configuration.

Under the standing assumption $0<R_1<R_2$, the condition $C_0=0$ already
implies
\[
\Gamma_0\neq0,
\]
as will be shown in the proof of Theorem~\ref{thm:exceptional-threshold}.
More precisely, the next theorem treats the nondegenerate exceptional case in
which
\[
C_0=0,
\qquad
C_2\neq0,
\]
where $C_2$ is defined in Theorem~\ref{thm:exceptional-threshold}.
Further degenerate situations, such as $C_0=0$ and $C_2=0$, require a
separate higher--order analysis.
\end{rem}

We next consider the nondegenerate exceptional threshold regime in the
double $\delta$--shell case.

\begin{thm}
\label{thm:exceptional-threshold}
Assume that $N=2$ and $\ell=0$, and define $C_0$ and $\Gamma_0$ by
\eqref{eq:C0} and \eqref{eq:Gamma0}. Suppose that
\[
C_0=0,
\qquad
C_2\neq0,
\]
where
\begin{align}
C_2
&=
-\frac23 R_1^2R_2^3\theta_2
-\frac23 R_1^3R_2^2\theta_1
\nonumber\\
&\quad
+
\theta_1\theta_2
\left[
-\frac23(R_1^3R_2+R_1R_2^3)
+R_1^2R_2^2+\frac13 R_1^4
\right].
\label{eq:C2}
\end{align}
Then, as $k\downarrow0$,
\begin{equation}\label{eq:A0-exceptional}
A_0(k)=C_2 k^4+O(k^6),
\end{equation}
and
\begin{equation}\label{eq:B0-exceptional}
B_0(k)=\Gamma_0 k^3+O(k^5).
\end{equation}
In particular,
\begin{equation}\label{eq:BA-exceptional}
\frac{B_0(k)}{A_0(k)}
=
\frac{\Gamma_0}{C_2}\frac{1}{k}
+O(k),
\qquad k\downarrow0,
\end{equation}
and hence
\begin{equation}\label{eq:S0-exceptional}
S_0(k)\to -1
\qquad (k\downarrow0).
\end{equation}
Equivalently, one may choose the phase shift $\delta_0(k)$ on
$(0,\varepsilon)$, for some $\varepsilon>0$, so that
\begin{equation}\label{eq:delta-exceptional}
\delta_0(k)\to \pm \frac{\pi}{2}
\qquad (k\downarrow0).
\end{equation}
\end{thm}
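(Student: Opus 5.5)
The plan is to refine the Taylor expansions already used in the proof of Theorem~\ref{thm:regular-threshold}, now carried one order further, and to show that $\Gamma_0\neq0$ whenever $C_0=0$.

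\textbf{Step 1: expansion of $A_0$.} Insert $s_j=kR_j-\tfrac16k^3R_j^3+O(k^5)$ and $c_j=1-\tfrac12k^2R_j^2+O(k^4)$ into \eqref{eq:A0}. Each term of $A_0$ is even in $k$: $k\,c_js_j$ is even, and $c_1c_2s_1s_2-c_2^2s_1^2$ is even. Hence only $k^2$, $k^4$ and $O(k^6)$ terms occur. The $k^2$ coefficient is $C_0$, as before, and it vanishes by assumption. For the $k^4$ coefficient, use $c_js_j=kR_j-\tfrac23k^3R_j^3+O(k^5)$, which gives the contributions $-\tfrac23R_1^2R_2^3\theta_2-\tfrac23R_1^3R_2^2\theta_1$. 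For the remaining product, expand
\[
c_1c_2s_1s_2=k^2R_1R_2\bigl(1-\tfrac23k^2(R_1^2+R_2^2)\bigr)+O(k^6)
\]
and
\[
c_2^2s_1^2=k^2R_1^2\bigl(1-k^2R_2^2-\tfrac13k^2R_1^2\bigr)+O(k^6).
\]
Their difference has $k^4$ coefficient $-\tfrac23(R_1^3R_2+R_1R_2^3)+R_1^2R_2^2+\tfrac13R_1^4$, which matches \eqref{eq:C2}. This proves \eqref{eq:A0-exceptional}.

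\textbf{Step 2: expansion of $B_0$.} Similarly, $B_0$ is odd in $k$, so \eqref{eq:B0-low} already gives $B_0=\Gamma_0k^3+O(k^5)$, which is \eqref{eq:B0-exceptional}.

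\textbf{Step 3: $\Gamma_0\neq0$.} This is the key point, since otherwise the ratio would not blow up. Set $d=R_2-R_1>0$. Then $C_0=0$ reads
\[
R_1R_2(R_1R_2+R_1\theta_2+R_2\theta_1)=-\theta_1\theta_2R_1d .
\]
So $\theta_1\theta_2d=-R_2(R_1R_2+R_1\theta_2+R_2\theta_1)$. Substituting into $\Gamma_0$ gives
\[
\Gamma_0=R_1R_2\bigl[R_1R_2(\theta_1+\theta_2)-R_2(R_1R_2+R_1\theta_2+R_2\theta_1)\bigr]=R_1R_2^2\bigl[-R_1R_2-\theta_1d\bigr].
\]
It therefore suffices to exclude $\theta_1=-R_1R_2/d$ under $C_0=0$. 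If $\theta_1=-R_1R_2/d$, then
\[
C_0=R_1^2R_2^2+R_1^2R_2\theta_2-R_1^2R_2^3/d-R_1^2R_2\theta_2=R_1^2R_2^2\bigl(1-R_2/d\bigr),
\]
and this is nonzero because $d<R_2$. So $\Gamma_0\neq0$. The main obstacle is this algebraic step, so the identity should be checked carefully before relying on it.

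\textbf{Step 4: conclusion.} Since $C_2\neq0$, $A_0(k)\neq0$ for small $k>0$, and
\[
\frac{B_0}{A_0}=\frac{\Gamma_0k^3(1+O(k^2))}{C_2k^4(1+O(k^2))}=\frac{\Gamma_0}{C_2k}+O(k),
\]
which is \eqref{eq:BA-exceptional}. Write $t=B_0/A_0$. Then \eqref{eq:S0} gives $S_0=(1-it)/(1+it)$, and $|t|\to\infty$ because $\Gamma_0\neq0$, so $S_0\to-1$. For the phase, choose on $(0,\varepsilon)$ the branch $\delta_0=-\arctan t$, which is admissible by \eqref{eq:delta0} since $A_0\neq0$ there. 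Then $\delta_0\to\mp\pi/2$ according to the sign of $\Gamma_0/C_2$, which gives \eqref{eq:delta-exceptional}.
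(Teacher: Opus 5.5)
Your proposal is correct and follows essentially the paper's proof: the fourth-order Taylor expansion of $A_0$, reuse of $B_0=\Gamma_0k^3+O(k^5)$, the same key identity showing that $C_0=0$ forces $\Gamma_0\neq0$ (you derive the contradiction by evaluating $C_0$ rather than $\Gamma_0$ at $\theta_1=-R_1R_2/(R_2-R_1)$), and the limit of $(1-it)/(1+it)$. The only small difference is the phase choice. Your $\delta_0=-\arctan(B_0/A_0)$ is best justified directly by $S_0=e^{-2i\arctan t}$ from \eqref{eq:S0}, since it agrees with $-\arg(A_0+iB_0)$ only modulo $\pi$ when $A_0<0$. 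It has the advantage over the paper's continuous-branch argument of fixing the limit as $\mp\pi/2$ according to the sign of $\Gamma_0/C_2$.
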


\begin{proof}
We first note that, under the standing assumption $0<R_1<R_2$, the condition
$C_0=0$ already implies
\[
\Gamma_0\neq0.
\]
Indeed, if $\Gamma_0=0$ as well, then
\[
0
=
\frac{C_0}{R_1}-\frac{\Gamma_0}{R_1R_2}
=
R_2\bigl(R_1R_2+(R_2-R_1)\theta_1\bigr),
\]
and hence
\[
R_1R_2+(R_2-R_1)\theta_1=0.
\]
Substituting this into
\begin{eqnarray*}
\Gamma_0
&=&
R_1R_2\Bigl(
R_1R_2(\theta_1+\theta_2)+(R_2-R_1)\theta_1\theta_2
\Bigr) \\
&=&
R_1R_2\Bigl(
\theta_2\bigl(R_1R_2+(R_2-R_1)\theta_1\bigr)+R_1R_2\theta_1
\Bigr),
\end{eqnarray*}
we obtain
\[
\Gamma_0=R_1^2R_2^2\theta_1
=
-\frac{R_1^3R_2^3}{R_2-R_1}\neq0,
\]
a contradiction.

We next expand $A_0(k)$ to higher order as $k\downarrow0$.

Using
\[
\sin(kR_j)=kR_j-\frac{(kR_j)^3}{6}+O(k^5),
\qquad
\cos(kR_j)=1-\frac{(kR_j)^2}{2}+O(k^4),
\qquad j=1,2,
\]
we first compute
\[
c_js_j
=
\left(1-\frac{k^2R_j^2}{2}+O(k^4)\right)
\left(kR_j-\frac{k^3R_j^3}{6}+O(k^5)\right)
=
kR_j-\frac{2}{3}k^3R_j^3+O(k^5).
\]
Hence
\begin{align*}
R_1^2k\,c_2s_2\,\theta_2
&=
R_1^2R_2\theta_2\,k^2
-\frac{2}{3}R_1^2R_2^3\theta_2\,k^4
+O(k^6),
\\
R_2^2k\,c_1s_1\,\theta_1
&=
R_1R_2^2\theta_1\,k^2
-\frac{2}{3}R_1^3R_2^2\theta_1\,k^4
+O(k^6).
\end{align*}

Next, we expand the last term in \eqref{eq:A0}. We have
\begin{align*}
s_1s_2
&=
k^2R_1R_2
-\frac{k^4}{6}(R_1^3R_2+R_1R_2^3)
+O(k^6),
\\
c_1c_2
&=
1-\frac{k^2}{2}(R_1^2+R_2^2)+O(k^4),
\end{align*}
and therefore
\begin{align*}
c_1c_2s_1s_2
&=
k^2R_1R_2
-\frac{2}{3}k^4(R_1^3R_2+R_1R_2^3)
+O(k^6).
\end{align*}
Also,
\begin{align*}
s_1^2
&=
k^2R_1^2-\frac{1}{3}k^4R_1^4+O(k^6),
\\
c_2^2
&=
1-k^2R_2^2+O(k^4),
\end{align*}
so
\begin{align*}
c_2^2s_1^2
&=
k^2R_1^2-k^4\left(R_1^2R_2^2+\frac13 R_1^4\right)+O(k^6).
\end{align*}
Thus
\begin{align*}
c_1c_2s_1s_2-c_2^2s_1^2
&=
k^2R_1(R_2-R_1)
\\
&\quad
+
k^4\left[
-\frac23(R_1^3R_2+R_1R_2^3)
+R_1^2R_2^2+\frac13 R_1^4
\right]
+O(k^6).
\end{align*}

Substituting these expansions into \eqref{eq:A0}, we obtain
\[
A_0(k)=C_0k^2+C_2k^4+O(k^6).
\]
Since $C_0=0$, this gives \eqref{eq:A0-exceptional}.

On the other hand, by \eqref{eq:B0-low},
\[
B_0(k)=\Gamma_0k^3+O(k^5),
\]
which proves \eqref{eq:B0-exceptional}.

Since $C_2\neq0$, we may divide \eqref{eq:B0-exceptional} by
\eqref{eq:A0-exceptional} and obtain
\[
\frac{B_0(k)}{A_0(k)}
=
\frac{\Gamma_0}{C_2}\frac{1}{k}+O(k),
\qquad k\downarrow0.
\]
Because $C_0=0$ implies $\Gamma_0\neq0$, the coefficient $\Gamma_0/C_2$ is
nonzero. Hence
\[
\left|\frac{B_0(k)}{A_0(k)}\right|\to\infty
\qquad (k\downarrow0).
\]
Using \eqref{eq:S0}, we write
\[
S_0(k)
=
\frac{1-i\,B_0(k)/A_0(k)}{1+i\,B_0(k)/A_0(k)}.
\]
Since $B_0(k)/A_0(k)\to\pm\infty$, it follows that
\[
S_0(k)\to -1
\qquad (k\downarrow0),
\]
which proves \eqref{eq:S0-exceptional}.

Since $C_0=0$ implies $\Gamma_0\neq0$, relation \eqref{eq:B0-exceptional}
shows that
\[
B_0(k)\neq0
\]
for all sufficiently small $k>0$. Hence
\[
A_0(k)+iB_0(k)\neq0
\qquad (0<k<\varepsilon)
\]
for some $\varepsilon>0$. Since $A_0(k)+iB_0(k)$ is continuous and
nonvanishing on $(0,\varepsilon)$, one may choose a continuous branch of
$\arg\bigl(A_0(k)+iB_0(k)\bigr)$ there. Therefore, by
Theorem~\ref{thm:S0}, one may choose the phase shift on $(0,\varepsilon)$ so
that
\[
\delta_0(k)=-\arg\bigl(A_0(k)+iB_0(k)\bigr).
\]
In view of \eqref{eq:S0-exceptional}, this choice satisfies
\[
\delta_0(k)\to \pm\frac{\pi}{2}
\qquad (k\downarrow0),
\]
which proves \eqref{eq:delta-exceptional}.
\end{proof}
We show that the condition $C_0=0$ is equivalent to the existence of a
nontrivial zero--energy radial solution whose exterior constant term
vanishes.

\begin{prop}
\label{prop:C0-zero-energy}
Assume that $N=2$ and $\ell=0$. Then the following are equivalent:

\begin{itemize}
\item[(i)]
\[
C_0=0.
\]

\item[(ii)]
There exists a nontrivial radial function $u$, piecewise harmonic away from
the shells, continuous across the shells, and satisfying the $\delta$--shell
jump conditions, such that
\[
Hu=0
\]
in the distributional sense, $u$ is regular at the origin, and
\[
u(x)=O(|x|^{-1})
\qquad (|x|\to\infty).
\]
\end{itemize}

More precisely, every radial zero--energy solution regular at the origin is of
the form
\[
u(x)=f(|x|),
\]
where
\[
f(r)=
\begin{cases}
a, & 0<r<R_1,\\[2mm]
b+\dfrac{c}{r}, & R_1<r<R_2,\\[3mm]
d+\dfrac{e}{r}, & r>R_2,
\end{cases}
\]
with constants satisfying
\[
c=-\theta_1 a,
\qquad
b=a+\frac{\theta_1}{R_1}a,
\]
and
\[
d=\frac{C_0}{R_1^2R_2^2}\,a.
\]
Hence the exterior constant term vanishes if and only if $C_0=0$.
\end{prop}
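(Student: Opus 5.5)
We solve the radial zero-energy equation explicitly and propagate the interface conditions of Theorem~\ref{thm:input} outward from the origin. A radial function that is harmonic on $\R^3$ minus the shells is of the form $\beta+\gamma/r$ on each annulus. Regularity at the origin forces the $1/r$ term to vanish on $0<r<R_1$, so $f=a$ there. This gives the three-piece form in the statement with five constants $a,b,c,d,e$. The distributional equation $Hu=0$ is equivalent to continuity of $f$ at $R_1$ and $R_2$ together with the jump conditions $f'(R_j+0)-f'(R_j-0)=\alpha_jf(R_j)$ from \eqref{eq:jump-intro-model}. Once the coefficients are known, these conditions are exactly what membership in the operator domain requires locally.

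\textbf{Step 1: the shell at $R_1$.} Continuity gives $b+c/R_1=a$. The jump condition gives $-c/R_1^2-0=\alpha_1a$, so $c=-\alpha_1R_1^2a=-\theta_1a$. Then $b=a+\theta_1a/R_1$, as stated.

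\textbf{Step 2: the shell at $R_2$.} Continuity gives $d+e/R_2=b+c/R_2=f(R_2)$. The jump condition gives $-e/R_2^2+c/R_2^2=\alpha_2f(R_2)$, so
\[
e=c-\theta_2\Bigl(b+\frac{c}{R_2}\Bigr),
\qquad
d=b+\frac{c-e}{R_2}=b+\frac{\theta_2}{R_2}\Bigl(b+\frac{c}{R_2}\Bigr).
\]
Substituting $b$ and $c$ and multiplying by $R_1^2R_2^2$ yields
\[
R_1^2R_2^2\,d
=a\bigl(R_1^2R_2^2+R_1R_2^2\theta_1+R_1^2R_2\theta_2+\theta_1\theta_2R_1(R_2-R_1)\bigr)
=C_0a .
\]
This is the main bookkeeping step, and the only place where care is needed. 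The $\theta_1\theta_2$ term comes from $\frac{\theta_2}{R_2}\bigl(\frac{\theta_1}{R_1}-\frac{\theta_1}{R_2}\bigr)a$.

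\textbf{Step 3: the equivalence.} Every regular radial solution is determined by $a$. If $a=0$, then $b=c=d=e=0$, so nontrivial solutions are exactly those with $a\neq0$. Such a solution satisfies $u=O(|x|^{-1})$ if and only if $d=0$, which by Step~2 holds if and only if $C_0=0$. For (i)$\Rightarrow$(ii), take $a=1$. The resulting $f$ is nontrivial, regular at the origin, and satisfies all interface conditions. Hence it solves $Hu=0$ distributionally: this follows by pairing with test functions and integrating by parts on each annulus, where the boundary terms cancel by the continuity and jump relations. Its exterior part is $e/r$. For (ii)$\Rightarrow$(i), the classification forces $d=0$ with $a\neq0$, hence $C_0=0$. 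I expect no real obstacle beyond the algebra in Step~2.
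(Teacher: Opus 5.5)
Your proposal is correct and follows essentially the same route as the paper: you use the piecewise-harmonic ansatz with regularity killing the $1/r$ term on $(0,R_1)$, push the continuity and jump conditions through $R_1$ and $R_2$ to get $R_1^2R_2^2\,d=C_0a$, and observe that $a=0$ forces the trivial solution. Your shortcut $d=b+\frac{\theta_2}{R_2}\bigl(b+\frac{c}{R_2}\bigr)$ is only a tidier form of the paper's step of substituting for $e$; the algebra checks out.
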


\begin{proof}
Let $u(x)=f(r)$ be a radial solution of $Hu=0$, where $r=|x|$. Away from the
shells, the equation reduces to
\[
-\Delta u=0.
\]
For an $s$--wave radial function in three dimensions, the general harmonic
solution is of the form
\[
f(r)=A+\frac{B}{r}.
\]
Regularity at the origin forces $B=0$ in the region $0<r<R_1$. Hence
\[
f(r)=
\begin{cases}
a, & 0<r<R_1,\\[2mm]
b+\dfrac{c}{r}, & R_1<r<R_2,\\[3mm]
d+\dfrac{e}{r}, & r>R_2,
\end{cases}
\]
for suitable constants $a,b,c,d,e$.

We impose continuity and the $\delta$--shell jump conditions.

At $r=R_1$, continuity gives
\[
a=b+\frac{c}{R_1}.
\]
Since
\[
f'(r)=0 \quad (0<r<R_1),
\qquad
f'(r)=-\frac{c}{r^2}\quad (R_1<r<R_2),
\]
the jump condition at $r=R_1$ yields
\[
-\frac{c}{R_1^2}
=
\alpha_1 a.
\]
Using $\theta_1=\alpha_1R_1^2$, we obtain
\[
c=-\theta_1 a,
\qquad
b=a+\frac{\theta_1}{R_1}a.
\]

At $r=R_2$, continuity gives
\[
b+\frac{c}{R_2}=d+\frac{e}{R_2}.
\]
Moreover,
\[
f'(r)=-\frac{c}{r^2}\quad (R_1<r<R_2),
\qquad
f'(r)=-\frac{e}{r^2}\quad (r>R_2),
\]
so the jump condition at $r=R_2$ becomes
\[
-\frac{e}{R_2^2}+\frac{c}{R_2^2}
=
\alpha_2\left(b+\frac{c}{R_2}\right).
\]
Equivalently,
\[
e
=
c-\theta_2\left(b+\frac{c}{R_2}\right).
\]
Substituting the expressions for $b$ and $c$, we obtain
\[
e
=
-\theta_1 a
-
\theta_2
\left(
a+\frac{\theta_1}{R_1}a-\frac{\theta_1}{R_2}a
\right).
\]

It remains to compute the exterior constant term $d$. From continuity at
$r=R_2$,
\[
d
=
b+\frac{c}{R_2}-\frac{e}{R_2}.
\]
Substituting the formulas above and simplifying, we find
\[
d
=
\left(
1+\frac{\theta_1}{R_1}
+\frac{\theta_2}{R_2}
+\theta_1\theta_2
\left(
\frac{1}{R_1R_2}-\frac{1}{R_2^2}
\right)
\right)a.
\]
Multiplying by $R_1^2R_2^2$, this becomes
\[
R_1^2R_2^2\,d
=
\Bigl(
R_1^2R_2^2
+
R_1R_2^2\theta_1
+
R_1^2R_2\theta_2
+
\theta_1\theta_2R_1(R_2-R_1)
\Bigr)a
=
C_0a.
\]
Hence
\[
d=\frac{C_0}{R_1^2R_2^2}\,a.
\]

Therefore the exterior constant term vanishes if and only if $d=0$, that is,
if and only if $C_0=0$.

Conversely, any piecewise harmonic radial function that is continuous across
$r=R_1,R_2$ and satisfies the above jump conditions is a distributional
solution of $Hu=0$. Moreover, if $a=0$, then the formulas obtained above give
\[
b=c=d=e=0,
\]
so the solution is trivial. Hence a nontrivial radial distributional solution,
regular at the origin and satisfying
\[
u(x)=O(|x|^{-1})
\qquad (|x|\to\infty),
\]
exists if and only if $a\neq0$ and $d=0$, which is equivalent to $C_0=0$.
\end{proof}

We show that the scattering length is determined by the exterior behavior of
the zero--energy radial solution.

\begin{prop}
\label{prop:scatt-length-zero-energy}
Assume that $N=2$, $\ell=0$, and $C_0\neq0$. Let $u$ be a nontrivial radial
solution of
\[
Hu=0
\]
which is regular at the origin, and normalize it so that
\[
u(x)=1-\frac{a}{|x|}
\qquad (|x|>R_2).
\]
Then
\[
a=a_{\mathrm s},
\]
where $a_{\mathrm s}$ is given by \eqref{eq:scatt-length}.
\end{prop}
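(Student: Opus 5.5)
Our plan is to use the explicit parametrization of radial zero--energy solutions from Proposition~\ref{prop:C0-zero-energy}. Every radial solution of $Hu=0$ regular at the origin is $u(x)=f(|x|)$ with $f(r)=a_0$ on $(0,R_1)$ (we rename the interior constant $a_0$ to avoid a clash with the scattering length $a$), $f(r)=b+c/r$ on $(R_1,R_2)$ and $f(r)=d+e/r$ for $r>R_2$. The constants are
\[
c=-\theta_1a_0,\qquad b=\Bigl(1+\frac{\theta_1}{R_1}\Bigr)a_0,\qquad d=\frac{C_0}{R_1^2R_2^2}a_0 ,
\]
and
\[
e=-\theta_1a_0-\theta_2\Bigl(1+\frac{\theta_1}{R_1}-\frac{\theta_1}{R_2}\Bigr)a_0 .
\]
Since $C_0\neq0$, the exterior constant $d$ is nonzero whenever $a_0\neq0$, and $a_0=0$ forces the trivial solution. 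Hence the normalization $d=1$ is possible and unique, namely $a_0=R_1^2R_2^2/C_0$. With this choice $u(x)=1+e/|x|$ for $|x|>R_2$, so $a=-e$ and $a=-e/d$ independently of normalization.

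The main step is to compute $-e/d$ and compare it with $\Gamma_0/C_0$ from \eqref{eq:scatt-length}. Multiplying $-e/a_0$ by $R_1R_2$ gives
\[
-\frac{R_1R_2\,e}{a_0}=R_1R_2(\theta_1+\theta_2)+\theta_1\theta_2(R_2-R_1).
\]
Multiplying once more by $R_1R_2$, we have
\[
-R_1^2R_2^2\frac{e}{a_0}=R_1^2R_2^2(\theta_1+\theta_2)+\theta_1\theta_2R_1R_2(R_2-R_1)=\Gamma_0 .
\]
Since $R_1^2R_2^2\,d/a_0=C_0$, dividing the two identities yields
\[
a=-\frac{e}{d}=\frac{\Gamma_0}{C_0}=a_{\mathrm s},
\]
using the identification $a_{\mathrm s}=\Gamma_0/C_0$ established in the proof of Theorem~\ref{thm:regular-threshold}.

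The only real obstacle is bookkeeping. First, the derivation of $e$ from the jump condition at $R_2$ must be checked: it reads $(c-e)/R_2^2=\alpha_2 f(R_2)$, so $e=c-\theta_2(b+c/R_2)$, as in the proof of Proposition~\ref{prop:C0-zero-energy}. Second, the algebraic simplification must produce exactly $\Gamma_0$, including the $\theta_1\theta_2(1/R_1-1/R_2)$ cross term, which becomes $\theta_1\theta_2R_1R_2(R_2-R_1)$ after scaling. We would also note that nontriviality together with $C_0\neq0$ guarantees $d\neq0$, so the normalization in the statement is always attainable.
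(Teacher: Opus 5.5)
Your proposal is correct and follows essentially the same route as the paper: both use the parametrization from Proposition~\ref{prop:C0-zero-energy}, identify $d=C_0a_0/(R_1^2R_2^2)$ and $e=-\Gamma_0a_0/(R_1^2R_2^2)$, normalize $d=1$ (possible since $C_0\neq0$), and read off $a=\Gamma_0/C_0=a_{\mathrm s}$. Your explicit check of the jump condition at $R_2$ and of the simplification of $-e/a_0$ to $\Gamma_0/(R_1^2R_2^2)$ just fills in algebra that the paper leaves implicit.
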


\begin{proof}
In the notation of Proposition~\ref{prop:C0-zero-energy}, the exterior part of
a radial zero--energy solution is
\[
d+\frac{e}{r},
\]
and the computation in the proof of
Proposition~\ref{prop:C0-zero-energy} gives
\[
d=\frac{C_0}{R_1^2R_2^2}a_0,
\qquad
e=-\frac{\Gamma_0}{R_1^2R_2^2}a_0,
\]
where $a_0$ denotes the interior constant on $(0,R_1)$. Since $C_0\neq0$, we
may normalize by $d=1$. Then
\[
u(x)
=
1+\frac{e}{|x|}
=
1-\frac{\Gamma_0}{C_0}\frac{1}{|x|},
\qquad |x|>R_2.
\]
By \eqref{eq:scatt-length}, one has
\[
a_{\mathrm s}=\frac{\Gamma_0}{C_0}.
\]
Therefore
\[
a=a_{\mathrm s}.
\]
\end{proof}
Proposition~\ref{prop:C0-zero-energy} gives a concrete interpretation of the
exceptional threshold condition in
Theorem~\ref{thm:exceptional-threshold}. In the regular case $C_0\neq0$, a
zero--energy radial solution regular at the origin has a nonzero constant term
at infinity, and the standard scattering--length description applies. By
contrast, the condition $C_0=0$ means that the exterior constant term
vanishes, leaving a decaying tail of order $r^{-1}$.

Thus the exceptional regime corresponds to a threshold--critical configuration
in which the contributions of the two shells cancel at zero energy. In this
case,
$$
S_0(k)\to -1
\qquad (k\downarrow0)
$$
as shown in Theorem~\ref{thm:exceptional-threshold}. 
This behavior reflects the presence of a zero--energy $s$--wave solution with vanishing exterior
constant term and explains the breakdown of the finite scattering--length
picture.

\begin{rem}
A further degenerate situation may occur if $C_0=0$ and $C_2=0$. In that case,
higher--order terms in the threshold expansion become relevant and require a
separate analysis.
\end{rem}

\section*{Concluding Remarks}

In this paper we derived a determinant formula for the channel scattering
coefficients of Schr\"odinger operators with finitely many concentric
$\delta$--shell interactions. The main result shows that, after partial--wave
reduction, the scattering problem is reduced to a finite-dimensional matrix
problem governed by the same boundary operator that appears in the resolvent
formula.

As an application, we analyzed in detail the double--shell model in the
$s$--wave channel. We obtained explicit formulas for the scattering matrix and
the scattering length, and we gave a zero--energy characterization of a
threshold--critical configuration. In particular, we showed that the condition
$C_0=0$ is equivalent to the existence of a zero--energy radial solution with
vanishing exterior constant term. In the corresponding nondegenerate
exceptional case, this threshold--critical configuration yields the limiting
behavior
$$
S_0(k)\to -1
\qquad (k\downarrow0).
$$

The determinant formula also admits a natural structural interpretation.
Whenever
$$
\|m_\ell(z)\Theta\|<1,
$$
where $\|\cdot\|$ denotes the operator norm of matrices on $\C^N$,
the inverse of the reduced boundary matrix is given by the convergent
Neumann series
$$
K_\ell(z)^{-1}
=
(I_N+m_\ell(z)\Theta)^{-1}
=
I_N-m_\ell(z)\Theta+(m_\ell(z)\Theta)^2-\cdots.
$$
Each term in this series represents one additional step in the multiple
scattering process between the shells.
Thus $K_\ell(z)$ provides a finite-dimensional description of this process,
and the determinant in the scattering formula may be viewed as encoding the
cumulative effect of these repeated interactions. This interpretation is
therefore not merely formal in regimes where the above Neumann series
converges, for example for sufficiently small coupling strengths.

Several natural problems remain for further study. One important problem is
to relate the determinant formula to the spectral shift function and the
Birman--Kre\u{\i}n formula \cite{BirmanKrein1962}. Another is
to analyze more degenerate threshold situations, where higher-order terms in
the expansion become relevant. It would also be interesting to extend the
present approach to more general hypersurface interactions.

\end{document}